\newtheorem{assumption}{Assumption}
\newtheorem{corollary}{Corollary}
\newtheorem{definition}{Definition}
\newtheorem{theorem}{Theorem}
\begin{document}


\title{Parallel variational quantum algorithms with gradient-informed restart to speed up optimisation in the presence of barren plateaus}

\author{Daniel Mastropietro, Georgios Korpas, Vyacheslav Kungurtsev, Jakub Marecek}


\date{\today}

\maketitle

\begin{abstract}
  Inspired by the Fleming-Viot stochastic process, we propose a parallel implementation of variational quantum algorithms with the aim of reducing the time spent by the algorithm in
  barren plateaus, where optimization direction is unclear.
  In the Fleming-Viot tradition, parallel searches are called particles. In the proposed approach, the search by a Fleming-Viot particle is stopped when it encounters a region where the gradient is too small or noisy, suggesting a barren plateau area. 
  The stopped particle continues the search after being regenerated at another location of the parameter space, potentially taking the exploration away from barren plateaus.
  We first analyze the behavior of the Fleming-Viot particles from a theoretical standpoint.
  We show that, when simulated annealing optimizers are used as particles,
  the Fleming-Viot system is expected to find the global optimum faster than a single simulated annealing optimizer,
  with a relative efficiency that increases proportionally
  to the percentage of barren plateaus in the domain.
  This result is corroborated by numerical experiments carried out on synthetic problems as well as on instances of the Max-Cut problem,
  which show that our method performs better than plain
  simulated annealing
  when large barren plateaus are present in the domain.

\end{abstract}


\maketitle


\section{Introduction}

In both academia and industry, there has been considerable interest in solving optimization problems using near-term quantum computers \cite{Preskill_2018,survey}, 
often using variational quantum algorithms (VQAs) on gate-based quantum computers \cite{farhi2014quantum,mcclean2016theory,Guerreschi2019,cerezo2021variational,egger2021warm}. 
VQAs \cite{mcclean2016theory, cerezo2021variational} are sometimes suggested as potential candidates to demonstrate an early quantum advantage in the near-term noisy quantum (NISQ) computers due to their simplicity, robustness to noise, and because the resources required are shared between a quantum computer and a classical optimizer. 
So far, VQAs have shown some potential in applications in a number of domains, including chemistry \cite{o2016scalable,kandala2017hardware,Colless2018,mccaskey2019quantum,mcardle2020quantum,google2020hartree}, strongly correlated systems in condensed matter physics \cite{kokail2019self,lyu2020accelerated,lau2021noisy,haug2022generalized,gong2023quantum}, combinatorial optimization \cite{farhi2014quantum,farhi2016quantum,wang2018quantum,crooks2018performance,moll2018quantum,harrigan2021quantum,brandhofer2022benchmarking,boulebnane2022solving,PhysRevResearch.4.033029}, solving
linear and nonlinear equations \cite{bravo2023variational,xu2021variational,PhysRevA.101.010301}, supervised quantum machine learning \cite{schuld2019quantum,havlivcek2019supervised}, generative models \cite{dallaire2018quantum,benedetti2019generative,PhysRevResearch.4.043092}, and quantum neural networks \cite{du2020expressive, NEURIPS2020_0ec96be3, cherrat2024quantum, hou2023duplication}, to name a few.

VQAs are iterative algorithms, where each iteration runs a parameterized quantum circuit that encodes the problem of interest, measures the expected value of some observable relevant to this problem, and classically learns the parameters that minimize a cost function related to the expected value of the observable. These iterations are repeated until the process converges to some satisfactory value. See Fig. \ref{fig:myvqa} overleaf for an illustratioon.

\subsection{Challenges encountered in VQAs}
Despite the potential of variational quantum algorithms, there exist several challenges in
their deployment to practical problems.  Concretely, the optimization problems involved in classically learning the parameters are non-convex, and it is expected that classical subroutines require super-exponential time in the number of variables to find the global optimum. In practical implementations, one often utilizes gradient-based classical optimization subroutines 
with no guarantee that such a subroutine will converge to a global minimum
in a finite amount of time \cite{bolte2021conservative,NEURIPS2021_70afbf22,aravanis2022polynomial,aspman2023riemannian,aspman2024taming}.
Specifically, the gradient-based algorithm may get trapped in (suboptimal) local minima or saddle points.
The iteration complexity, as well as the impact of the biased noise present in NISQ devices, are studied in \cite{kungurtsev2024iteration}. Further convergence results are presented in \cite{binkowski2024elementary}.

\begin{figure}[!ht]
    \centering
    \includegraphics[scale=0.9]{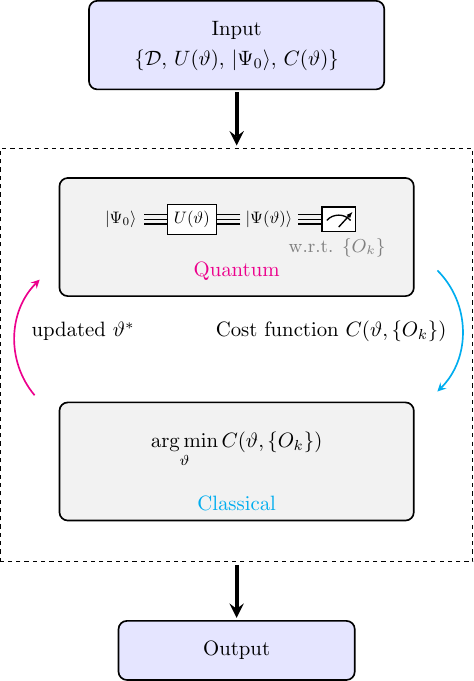}
    \caption{The VQA schema, as presented in \cite{survey}. In dotted square the classical-quantum feedback loop. The circuit is initialized with parameters $\vartheta$ with the purpose of obtaining the empirical estimate of the expectation value of a set of observables $\{ O_k \}_{k\in \mathbb{N}}$. In the classical part, a classical optimizer minimizes the cost function in order to obtain a new set of parameters, $ \vartheta^*$ to be fed back to the quantum circuit.}
    \label{fig:myvqa}
\end{figure}

One major drawback of variational quantum algorithms is that they suffer from the so-called barren plateau  problem \cite{mcclean2018barren,wang2021noise,Cerezo2021other,arrasmith2021effect,holmes2021barren,larocca2022diagnosing},  especially when the problem is approached with hardware efficient ans\"atze.
Barren plateaus are large regions in the parameter space where the gradient of the cost function vanishes or is noisy around zero, which, in turn, makes efficient optimization very difficult. (See Def.~\ref{def:barren_plateau} on page \pageref{def:barren_plateau} for a formal definition.) Specifically,  gradient descent iteration is:
\begin{align*}
    \vartheta(t+1)-\vartheta(t) =-\eta \nabla f,
\end{align*}
where $t$ is the integer time step of gradient descent dynamics, $\eta$ is the learning rate, and $f$ is the cost function. The observation of Ref. \cite{mcclean2018barren} is that, if the chosen variational circuit is highly random, the gradient of the cost function is generically suppressed by the dimension of the Hilbert space, and the variation of the cost function during gradient descent is very small, that is, $\delta f \equiv \frac{f(t+1)-f(t)}{f(t)} \ll 1$ for step $t$.
A Lie-algebraic characterization of barren plateaus can be used to show that the gradient vanishes exponentially fast with the number of qubits, $n$. Specifically,
\cite{fontana2024characterizing} proved that, in the worst-case scenario, the gradient variance vanishes with the inverse of the dimension of the Lie algebra corresponding to the circuit,
as
\begin{align*}
    {\rm Var} (\|\nabla f\|) = \mathcal{O}(2^{-2n}).
\end{align*}
In particular, this worst-case behaviour can be demonstrated \cite{Sauvage2024} in the fully connected Max-Cut problem analyzed in this work.

In general, the problem of barren plateaus has also been argued to be related to the choice of cost function \cite{cerezo2021cost}, circuit entanglement \cite{patti2021entanglement,marrero2021entanglement}, circuit parameterization expressivity \cite{holmes2022connecting}, the presence of Pauli noise \cite{wang2021noise} while it has been suggested that gradient-free methods such as the finite difference stochastic approximation algorithm (FDSA) and the simultaneous perturbation stochastic approximation algorithm (SPSA) may also suffer from the problem of barren plateaus \cite{arrasmith2021effect}. It should be noted that, within quantum machine learning applications, similar to deep learning, one wants to achieve a high level of expressivity. Recent work in Ref. \cite{schumann2024emergence} has shown that highly expressive parameterized circuits (deep circuits) give rise both to standard as well as noise-induced barren plateaus. In particular, \cite{mele2024noiseinduced} shows that the effectiveness of deep circuits is reduced by the presence of noise, suggesting that shallower circuits would be preferred thanks to their lower noise. Therefore, we conclude that this barren plateau problem can hinder the possibility of observing near-term advantage with variational quantum algorithms.

Several methods have been devised to avoid barren plateaus. For example, \cite{PRXQuantum.3.020365} uses the technique of classical shadows \cite{huang2020predicting}. Other approaches have been proposed in several works, for example, \cite{zhang2022escaping,friedrich2022avoiding}. Worth mentioning is the approach of Ref.\ \cite{binkowski2023barren} where the authors construct certain conic extensions of the parameterized quantum circuits allowing for jumps out of barren plateaus. Other methods include certain parameter initialization
strategies \cite{grant2019initialization}, gradients via parameter correlation \cite{volkoff2021large}, ansatz classical neural network pre-training
\cite{verdon2019learning}, and layer-wise training \cite{skolik2021layerwise}.


Before we dive into our proposal, we would like to motivate our ideas by considering a problem that appears to be quite similar. Consider random walks on graphs. Despite their natural appeal for graph exploration and numerous applications in distributed systems, random walks exhibit a significant limitation: slow mixing. For example, in a simple ring topology with $n$ nodes, a random walk requires an average of $\mathcal{O}(n^2)$ steps to cover the entire graph, whereas a deterministic traversal can achieve the same in just $ n $ steps. The cover time, which represents the expected duration for a random walk to visit every node at least once, emerges as a critical metric, emphasizing the challenges posed by random walks. Ref. \cite{alon2008many} suggested a way to address the latency of random walks by utilizing multiple simultaneous random walks. The hypothesis is that by initiating $k$ random walks concurrently, the graph coverage can be expedited, reducing the cover time. It has been empirically shown that a spectrum of outcomes, ranging from logarithmic speed-ups in cases like ring graphs to exponential speed-ups in structures like the barbell graph, can be obtained. However, for a wide range of interesting graphs, only a linear speed-up in cover time has been observed, as long as $k$ remained within a permissible range. 

Interestingly, the above limitation is somewhat related to the training of a variational quantum algorithm as seen from the point of view of a non-finite unbounded graph. In particular, the parameter landscape is analogous to intricate graph structures where random walks struggle with the extra difficulty of the vanishing gradients. 

\subsection{Our contribution}
The usage of multiple parallel random walks to expedite coverage \cite{MARTI20131} can inspire similar potentially heuristic strategies in this context and to some extent our proposal follows along this line. As we will explain in detail in Sec.~\ref{sec:fv}, by employing multiple parallel gradient descent initializations with a specific killing and regeneration rule, the so-called Fleming-Viot inspired algorithm, described in Alg.~\ref{alg:FV} of App.~\ref{app:algorithm}, seems to better navigate the parameter landscape, efficiently circumventing barren plateaus by
taking the paths towards other locations that potentially avoid them.

The rest of the paper is organized as follows: in Sec.~\ref{sec:fv} we introduce the Fleming-Viot process and describe how we can apply its principles to the classical optimization step of VQAs, which gives rise to Alg.~\ref{alg:FV}.
In Sec.~\ref{sec:theorem} we provide and prove a theorem on the reduced expected hitting time of our method against vanilla simulated annealing.
In Sec.~\ref{sec:num_exp} we showcase an application of our Fleming-Viot inspired algorithm for a synthetic problem with varying percentages of barren plateaus in the parameter landscape, as well as an application to a Quantum Approximation Optimization Algorithm (QAOA, a special case of VQA) for an instance of the Max-Cut problem on a graph with 8 nodes. We summarize and conclude in Sec.~\ref{sec:sumconc}.

\section{Fleming-Viot Process within VQAs}\label{sec:fv}

The Fleming-Viot stochastic process (FV) was initially introduced by Fleming and Viot \cite{flemingviot1979}, to model biological evolution. In the 2000s, several works studied the relationship between the Fleming-Viot process and Markov processes, starting with Burdzy \emph{et al.}
\cite{Burdzy2000} who proved it can be used to approximate the eigenfunctions of the Laplacian (the generator of a Brownian motion) in a bounded domain.
More recently, \cite{grigorescu2004hydrodynamic}
proved this approximation is valid for diffusion processes that are more general than the Brownian motion, in both bounded and unbounded domains. The FV process was originally defined as a continuous-time process, but it can also be defined as a discrete-time process \cite{budhiraja2022approximating}, which is the context that will be utilized in this work.

In its original definition, the Fleming-Viot process is a continuous-time stochastic process consisting of $N$ particles that initially evolve independently of each other on a state space $\Lambda \cup \{0\}$. Each particle follows the dynamics of an underlying Markov process as long as it does not visit $0$, an absorption state, i.e., where the underlying process stays forever or, in other words, is killed. The FV process is defined so that when a particle is killed, one of the remaining particles, chosen uniformly at random, is split into two particles, each at the same state in $\Lambda$ of the particle before the split. After the split, each particle evolves independently of each other, just like before the split.

In the context of the present article, the Fleming-Viot process is used as a driving idea of a parallelized optimization process. Each particle represents a different instance of a stochastic optimizer of choice (e.g., SPSA or simulated annealing) of a parameter vector of interest, $\vartheta$, whose values evolve following the rules of the selected optimizer. Each particle initially evolves independently of each other until it is considered killed, i.e. when the corresponding $\vartheta$ reaches a subset of the parameter space considered not to be interesting for further exploration by the optimization process. At this point, two things can happen to the killed particle, which correspond to two different regeneration schemes: either the particle is reactivated, which corresponds to an exploitation strategy as it exploits the information about other particles, or the particle is reinitialized, which corresponds to an exploration strategy as it favours the exploration of the parameter space. More specifically,
\begin{itemize}
    \item under the exploitation strategy, the particle is regenerated to the position of another particle chosen uniformly at random among the particles not currently killed --which are expected to be in a more interesting location than the killed particle.
    \item under the exploration strategy, the particle is regenerated to a randomly chosen point in the parameter space. 
\end{itemize}
Fig.~\ref{fig:fv_particle_regeneration} illustrates this. 
For every killed particle, the choice of one of these options is made with a fixed probability defined by the so-called ``exploration rate'' parameter.
Notice that in the first case, and thanks to the stochastic nature of the optimizer, the two particles initially at the same position evolve independently of each other after the regeneration.

Overall, in the VQA context, the FV parallel optimizer is used for the classical optimization part with a ``killing'' rule on the average magnitude of the gradient computed on a recent window of optimization steps.


The details of the FV parallel optimization algorithm are defined in Alg.~\ref{alg:FV} of App.~\ref{app:algorithm}.

\section{Theoretical Analysis}\label{sec:theorem}

Our main result is a hitting-time analysis for Alg.~\ref{alg:FV}.
In the next section (Sec.~\ref{sec:num_exp}), the analysis is corroborated by numerical results.
To present the analysis, let us define several quantities we will refer to:
\begin{itemize}
    \item $\mathbb{X}$ is a normed, finite-dimensional space. 
    \item $x \in \mathbb{X}$ will denote the optimization variable, i.e. $\vartheta$ in Fig.~\ref{fig:myvqa}. 
    \item $X_k$ denotes the random process associated with the iterates at optimization step $k$of the two analyzed algorithms, simulated annealing with Fleming-Viot (SA+FV) and without (SA).
    \item $D$ is the (Lebesgue) measure of the domain for the optimization variable $x$.
    \item $f(x)$ is the objective function with Lipschitz-smooth gradient.
\end{itemize}
We denote the equation corresponding to SA as:
\begin{equation}\label{equ:sa_iter}
X_{k+1}=X_k-\eta_k g_k+\sqrt{\eta_k}\epsilon_k n_k
\end{equation}
where $g_k\approx \nabla f(x_k)$ is a noisy estimate of the gradient (e.g. given by SPSA), $\eta_k$ is the learning rate, $\epsilon_k$ is a temperature factor defining the noise level of the SA algorithm, and $n_k$ is a standard Gaussian random variable.
We are interested, in particular, with the impact of the barren plateaus on the hitting time of SA \eqref{equ:sa_iter}, where:

\begin{definition}[Barrren plateau of a function $f$]
\label{def:barren_plateau}
Let us have $\alpha > 0$ and a ball $\mathbb{B}(\hat{x},R) \subset \mathbb{X}$ of radius $R$ centered at $\hat{x} \in \mathbb{X}$.
The ball $\mathbb{B}(\hat{x},R) $ is an $\alpha$-barrren plateau of function $f$ whenever 
$\|\nabla f(x)\|\le \alpha$ for all $x\in \mathbb{B}(\hat{x},R)$.
\end{definition}

Furthermore, we assume:

\begin{assumption}[Stationary points and regions]
\label{ass:1}
    When we observe the stochastic process \eqref{equ:sa_iter} arriving at $\| X_{k+1} - X_k \|_2^2 \le \beta$,  we assume that for some predefined small positive constant $\beta$, this indicates that $X_{k}$ satisfies $\|\nabla f(X_{k})\|\le \alpha$. Furthermore we assume that
    this is a local minimizer with probability $l/s$,
    a saddle point with probability $a/s$,
    and an $\alpha$-barrren plateau with probability $b/s$.
    Simplistically, this could be seen as having 
    $s=l+b+a$ of $\beta$-stationary points or regions, out of which there are 
    $l$ local minimizers, one of which is global, $b$ plateaus that are $\alpha$-barrren, and $a$ saddle points. 
    At the same time, consider that a fraction $0<B<1$ of the space is barren, i.e. $BD$ is the total measure of a barren plateau, and that each $\alpha$-barrren plateau is a ball of equal size, i.e., of $BD/b$ measure, from which we can compute the radius to be proportional to $(BD/b)^{1/d}$, with $d$ the dimension of $x$. 
\end{assumption}

When $x$ is in an $\alpha$-barren plateau, the noise in the iteration dominates the gradient term. Thus, we can treat the gradient as insignificant, and so we simplistically model the processes.

\begin{assumption}[Behaviour in an $\alpha$-barren plateau]
\label{ass:2}
    We assume that with FV, identification of barrenness is performed with full accuracy. 
    Thus, upon reaching an $\alpha$-barren plateau:
    \begin{enumerate}
        \item SA+FV immediately identifies the plateau and performs a jump escaping the plateau.
        \item SA behaves as a Gaussian random walk, being defined entirely by the noise component of the iteration.
    \end{enumerate}
\end{assumption}
    
Then:

\begin{theorem}[Hitting-time analysis]
    Under Assumptions~\ref{ass:1}--\ref{ass:2}, the expected time of SA+FV to hit a global minimizer is given by the sum of a time to hit a stationary point $C_s D\alpha^{-4}$ and 
    \[
    (s-1)\left[C_s D\alpha^{-4}+\frac{a C_a}{s\alpha}+ \frac{(l-1) C_l}{s\alpha} \right]
    \]
    while the expected time of SA \eqref{equ:sa_iter} without FV to hit a global minimizer is again the sum of the time to hit a stationary point $C_s D\alpha^{-4}$ and 
    \[
    \begin{array}{l}
    (s-1)
    \left[C_s D\alpha^{-4}+\frac{a C_a}{s\alpha}+ \frac{(l-1) C_l}{s\alpha}
    + \mathbf{\frac{C_fBD}{sb}}\right]     
    \end{array}
    \]
    where the constants $C_s, C_a, C_l, C_f$ scale the escape time for the stationary points of Assumptions~\ref{ass:1}, that is,
    stationary points, saddle points, local minimizers, and barren plateaus. 
\end{theorem}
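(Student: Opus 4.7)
The plan is to decompose the total hitting time as a sum over independent cycles, each consisting of a travel phase (from one \(\beta\)-stationary region to the next) followed by an interaction phase at the stationary region visited. By Assumption~\ref{ass:1}, the type of each visited region is drawn independently with probabilities \(l/s\), \(a/s\), and \(b/s\), and the global minimizer is one specific local minimum; hence the probability that a given cycle terminates the search is \(1/s\), the number of cycles is geometric with mean \(s\), and the decomposition contributes one guaranteed travel phase plus \((s-1)\) unsuccessful cycles on expectation. This reproduces the outer combinatorial structure \(C_sD\alpha^{-4}+(s-1)[\cdots]\) common to both expressions of the theorem, and reduces the proof to computing the expected duration of a single cycle for each algorithm.

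I would then estimate each per-cycle component separately. For the travel time across a domain of Lebesgue measure \(D\) outside of stationary regions (where \(\|\nabla f\|>\alpha\)), a Foster--Lyapunov drift argument applied to \(f\) itself, combined with the Lipschitz-smoothness of \(\nabla f\) and the noise scale \(\sqrt{\eta_k}\epsilon_k\), forces \(\eta_k=\mathcal{O}(\alpha^2)\) for stable descent and yields travel time of order \(D\alpha^{-4}\), giving the \(C_sD\alpha^{-4}\) term. The escape times from saddle points and from non-global local minima inside the \(\beta\)-stationary set are handled by a standard Kramers-type estimate giving an \(\mathcal{O}(1/\alpha)\) escape, and weighting by the visit probabilities \(a/s\) and \((l-1)/s\) produces the identical \(\frac{aC_a}{s\alpha}\) and \(\frac{(l-1)C_l}{s\alpha}\) terms appearing in both expressions.

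The distinguishing barren-plateau term is where Assumption~\ref{ass:2} is decisive. For SA+FV, once \(\|X_{k+1}-X_k\|^2\le\beta\) is detected on a plateau, the killing-and-regeneration rule fires and the particle is relocated away from the plateau, so its contribution to the expected cycle time vanishes and the theorem's SA+FV expression is complete. For plain SA, Assumption~\ref{ass:2} reduces the in-plateau dynamics to a Gaussian random walk, and I would bound the mean exit time from a ball of measure \(BD/b\) and radius proportional to \((BD/b)^{1/d}\) by an exit-time computation scaling as \(C_f BD/b\); averaging over the plateau visit probability \(b/s\) then yields the extra \(\frac{C_fBD}{sb}\) summand in the SA expression.

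The main obstacle I anticipate is making the travel-time bound \(C_sD\alpha^{-4}\) honest rather than merely dimensional: the process can meander near and re-enter stationary regions many times before traversing the full domain, so either a coupling with a pure drift-diffusion on \(\mathbb{R}^d\) or an occupation-time estimate is needed to rule out pathological recurrence and to justify the independence of cycles implicitly used in the geometric decomposition. The saddle and plateau exit estimates are comparatively routine once Assumption~\ref{ass:2} is granted, and the geometric cycle structure cleanly isolates the FV improvement---the saved \((s-1)\frac{C_fBD}{sb}\) term---from the rest of the analysis.
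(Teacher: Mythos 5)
Your proposal follows essentially the same argument as the paper: a renewal/cycle decomposition in which each ``event'' ends at one of the $s$ stationary regions hit uniformly at random, a geometric number of unsuccessful visits with mean $s-1$, per-type escape costs $C_a\alpha^{-1}$, $C_l\alpha^{-1}$, and a Gaussian sojourn time $C_fR^d\propto C_fBD/b$ in a plateau that FV eliminates. The only difference is that where you propose to derive the travel and escape bounds (Foster--Lyapunov, Kramers-type estimates), the paper simply imports them from Chen et al.\ (Theorem 3.3, Lemma 4.2), Zhang et al.\ (Corollary 6), and Ciesielski--Taylor, so the technical obstacle you flag about making $C_sD\alpha^{-4}$ honest is delegated to the cited results rather than resolved in the proof.
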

\begin{proof}
 Consider a process $X_k$ driven by SA with FV and $\hat{X}_k$ driven by SA without it. Consider the beginning of an event as the initialization of SA, at $k=0$, at some random non-barren point.

With $\alpha$ the gradient threshold and a decreasing learning rate $\eta_k \sim k^{-1/2}$, the hitting time to a stationary point is $C_s D\alpha^{-4}$    \cite[Theorem 3.3]{chen2020stationary}. We consider that generally there is an equal probability of the iterate hitting any one of the $s$ stationary points. Let us consider each type of point and its consequences on the two algorithms' iterations.

With probability $1/s$, the process has reached a global minimum.

Otherwise, with probability $a/s$, both algorithms take $C_a\alpha^{-1}$~\cite[Lemma 4.2]{chen2020stationary} to escape a saddle point. Subsequently the next iterated event begins, i.e., we have another SA path from an agnostic starting point proceeding until it hits a stationary point. 

Alternatively, with probability $(l-1)/s$, both algorithms take $C_l\alpha^{-1}$~\cite[Corollary 6]{zhang2017hitting} to escape a local minimum. Subsequently, the next iterated event begins.

Finally, with probability $b/s$, the algorithm reaches a barren plateau. At this point FV escapes, reinitializing the next event (to a randomly chosen point).
However, without FV, the time the algorithm remains at the barren plateau has the expected value of the sojourn time of a Gaussian random walk in a convex body.
The expected sojourn time is given by $C_f R^d$~\cite{ciesielski1962first}. Subsequently, the event restarts. 

Now we sum these quantities.
The final run, that is, the event at which the stationary point reached is a global minimizer, lasts for as long as the algorithm iterates reach a stationary point (which in this case is the global minimum).

The expected number of stationary points traversed before a global minimizer is reached is a geometric random variable with parameter $1/s$. Thus, for an expected $s-1$ time steps, we hit the other stationary points, and this enables us to compute the expected hitting time as the $s-1$ sum of the probabilities of hitting the different types of stationary points and their expected contribution to the total iteration time.
\end{proof}

This means that:

\begin{corollary}
Compared to the expected hitting time of SA, the expected time of SA+FV to hit a global minimizer is sped up
by a total expected number of iterations given by 
$$\frac{C_fBD(s-1)}{bs}$$
i.e., the speed up is an additive term that is linear with respect to both the number of separated barren plateaus as well as the portion of the space that is barren.
\end{corollary}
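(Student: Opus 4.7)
The plan is to derive the corollary as a direct algebraic consequence of the Theorem: both the SA+FV expected hitting time $T_{\mathrm{FV}}$ and the plain SA expected hitting time $T_{\mathrm{SA}}$ are already given there in closed form, so the speed-up is just the difference $T_{\mathrm{SA}}-T_{\mathrm{FV}}$. Before computing that difference, the key conceptual observation I would make is that, outside of $\alpha$-barren plateaus, Assumption~\ref{ass:2} forces both processes to be driven by identical SA dynamics. Consequently, the contributions coming from (i) the first run to a stationary point, (ii) escaping saddles, and (iii) escaping non-global local minima appear identically in the two closed-form expressions and will cancel term by term.

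Carrying out the subtraction, I would first cancel the leading $C_s D\alpha^{-4}$ (the cost of the initial hit to a stationary point). Then, inside the geometric factor $(s-1)$ counting expected intermediate stationary events, I would cancel the SA-restart cost $C_s D\alpha^{-4}$, the saddle contribution $aC_a/(s\alpha)$, and the local-minimum contribution $(l-1)C_l/(s\alpha)$. The only surviving term is the barren-plateau sojourn cost $C_f B D/(sb)$, which is present for SA (since the naive process behaves as a Gaussian random walk trapped in a convex body of radius $\propto (BD/b)^{1/d}$) but absent for SA+FV (since an FV particle identifies and escapes the plateau immediately). Multiplying by the geometric factor $(s-1)$ gives
\[
T_{\mathrm{SA}}-T_{\mathrm{FV}}=(s-1)\cdot\frac{C_f B D}{s b}=\frac{C_f B D (s-1)}{b s},
\]
which is the announced speed-up.

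I do not anticipate a genuine obstacle, since the statement is essentially bookkeeping on top of the Theorem. The one step I would verify rather than gloss over is that the geometric count $(s-1)$ of intermediate stationary events has the \emph{same} expectation under both algorithms: this is not automatic because FV regenerates at a uniformly random point after a barren event, whereas SA continues its trajectory, so in principle the distribution over the next stationary type could differ. However, Assumption~\ref{ass:1} assigns probabilities $l/s$, $a/s$, $b/s$ to the three types of stationary events independently of the starting location, so regeneration preserves the geometric distribution of the number of intermediate events with parameter $1/s$, and the term-by-term cancellation is legitimate. Finally, I would record the interpretation immediately visible from the formula: the additive speed-up grows linearly in the barren measure $BD$ and in the fraction $(s-1)/s$ of non-global stationary events, while decreasing in the number $b$ of separate barren plateaus (larger $b$ means smaller individual plateaus, hence shorter random-walk sojourn per visit).
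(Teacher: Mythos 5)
Your proposal is correct and matches the paper's (implicit) derivation exactly: the corollary is obtained by subtracting the two closed-form hitting times from the Theorem, with all non-barren contributions cancelling and only the barren-plateau sojourn term $(s-1)\,C_f BD/(sb)$ surviving. Your closing observation that the speed-up \emph{decreases} in $b$ for fixed $B$ is mathematically right and in fact more accurate than the corollary's own phrasing of linearity ``with respect to the number of separated barren plateaus,'' but this does not affect the validity of the argument.
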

Note that this appears to be independent of $\alpha$. However, this is misleading, as we expect the definition of the threshold to affect $B$ and $b$, thus the more relaxed the condition, the larger $\alpha$, the larger the region that is labeled ``barren''. 

\begin{figure}[tb]
    \centering
    \includegraphics[scale=1]{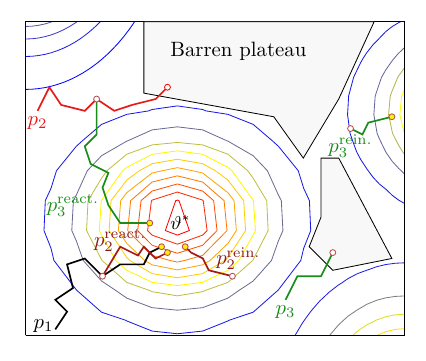}
    \caption{Illustration of the FV optimization algorithm using $N = 3$ particles, showing the two options described in Sec.~\ref{sec:fv} of particle regeneration after it is considered absorbed or killed: (i) particle reactivation (a.k.a. exploitation strategy), where an absorbed particle is regenerated to the position of one of the other particles selected uniformly at random, and (ii)
    particle reinitialization (a.k.a. exploration strategy),
    where an absorbed particle is regenerated to a randomly selected point in the parameter space, according to some probability distribution (e.g. uniform distribution). In the picture, particle $p_1$ is never absorbed, while particles $p_2$ and $p_3$ are absorbed at two different barren plateaus, for which both a reactivation and reinitialization paths are shown.}
    \label{fig:fv_particle_regeneration}
\end{figure}

\section{Numerical Experiments}\label{sec:num_exp}
We analyzed the FV parallel optimizer using, as each particle's optimizer, the simulated annealing (SA) algorithm, which sits on top of the simultaneous perturbation stochastic approximation (SPSA) algorithm \cite{Pincus1970,Khachaturyan1981,vanLaarhoven1987}. This choice was driven by the fact that the SA algorithm is a good balance between simplicity and power.

Given $N$ particles in the FV optimizer and $T$ optimization iterations for each particle, optimal values found on selected optimization problems are compared among four different optimization strategies:
\begin{enumerate}
\item ``FV-50\%'': FV optimizer with 50\% exploration rate;
\item ``FV-100\%'': FV optimizer with 100\% exploration rate;
\item ``SA-1'': a single SA optimizer run for $N \times T$ optimization iterations;
\item ``SA-N'': $N$ SA optimizers starting at the same ans\"atze (initial parameter values) as those chosen for the particles in the FV optimizers.
\end{enumerate}

Two exploration rates are considered for the FV optimizer in order to analyze the effect of the exploitation vs. exploration strategies, which are controlled precisely by the exploration rate parameter. Note that, in the $100\%$ exploration rate, all killed particles are regenerated under the exploration strategy (described in Sec.~\ref{sec:fv}). 

In ``SA-1'', the number of iterations is $N$ times larger than the other methods so that it receives the same number of opportunities to find the global minimum as the other three methods. In ``SA-N'', the same ans\"atze as FV are chosen in order to reduce sources of variability in their performances and achieve a more robust comparison.

We carried out the above optimization strategies on the following two non-convex optimization problems on a bounded domain:
\begin{enumerate}
    \item[(\emph{i})] two-dimensional synthetic functions generated with a predefined percentage of barren plateaus;
    \item[(\emph{ii})] the Max-Cut problem approximated by a noiseless two-dimensional quantum circuit.
\end{enumerate}

The FV parameters that are common to both problems are listed in Tab.~\ref{tab:parameterization_common}.

\begin{table}[htbp]
\centering
\scalebox{1.0}{
\begin{tabular}{|c|c|c|}
\hline
\multicolumn{1}{|c|}{\textbf{Parameter}}                        & \multicolumn{1}{l|}{\textbf{Value range}} & \multicolumn{1}{l|}{\textbf{Value(s) used}} \\ \hline
\begin{tabular}[c]{@{}c@{}}$N$: No. particles\end{tabular}       & $\geq 2$                                  & 10                                       \\ \hline
\begin{tabular}[c]{@{}c@{}}$B$: Burn-in time steps \\
for reference gradient\end{tabular}& $\geq 1$         & 10                                        \\ \hline
\begin{tabular}[c]{@{}c@{}}$W$: Window size for \\
average gradient\end{tabular}& $\geq 1$         & 10                                        \\ \hline
\begin{tabular}[c]{@{}c@{}}$\alpha$: Relative gradient \\
absorption threshold\end{tabular}& $\geq 0$         & 1                                        \\ \hline
\begin{tabular}[c]{@{}c@{}}$\epsilon$: Exploration rate\end{tabular}        & 0\%-100\%                                 & 50\%, 100\%                          \\ \hline
\end{tabular}
}
\caption{FV parameters used in both the synthetic and Max-Cut experiments, as defined in Alg.~\ref{alg:FV} of App.~\ref{app:algorithm}.}
    \label{tab:parameterization_common}
\end{table}

During the optimization procedure, given the scale of the parameter space in both problems ($\sim 1$), each parameter update given by Eq.~\eqref{equ:sa_iter} is limited to $0.1$, in order to avoid excessively large one-step updates. Any value that happens to fall outside the parameter domain in a given dimension is clipped to the domain border of the dimension. The SA temperature factor in  Eq.~\eqref{equ:sa_iter} is set to $\epsilon_k = 0.1$, $\forall k$; the learning rate $\eta_k$ is initially selected (automatically by the SPSA algorithm) from an analysis of the function's landscape in the neighbourhood of the ansatz, and then decayed with a $\frac{1}{k}$ power law, in order to satisfy the convergence conditions of the SPSA algorithm on which SA is based.

Each set of experiments is defined by a parameterization specific to the optimization problem under consideration, described in the following two sections. Each experiment was run for $T = 50$ optimization steps per particle, and the results are jointly presented in Sec.~\ref{sec:results}.

\subsection{Synthetic Experiments}\label{sec:num_exp_synthetic}
Each synthetic experiment is defined by an expected percentage of barren plateaus in the optimization landscape whose domain is set as the two-dimensional rectangular area $[0, 1] \times [0, 1]$.
The synthetic function is constructed as the output of a smoothing spline interpolation on a set of $(x, y, z)$ points, using the procedure described in App. \ref{app:synthetic_function_generation}. The synthetic functions generated in this way are noiseless and smooth.

With the goal of understanding the behaviour of the optimization methods on different barren levels, synthetic functions are generated using three different desired nominal barren percentages: 25\%, 50\%, 80\%, aiming at generating function instances with low, medium and high barren levels. The precise definition of these three barren levels is given in Section~\ref{sec:results}. Each accepted\footnote{As described in App.~\ref{app:synthetic_function_generation}.} synthetic function is classified into one of these barren levels based on the actual percentage of barren plateaus present in the function, estimated as described in App.~\ref{app:actual_barren_percentage_estimation}. Finally, in order to obtain a balanced distribution of functions assigned to all three barren levels, synthetic functions are generated until 9 functions are accepted for analysis in each level.

In an attempt to reduce the probability of a lucky start that may disrupt overall conclusions about the impact of the barren level on the method performances, we have restricted the choice of the ans\"atze used in each optimization experiment to a uniformly random selection of the grid points in the nominal barren areas, as long as their distance to the global optimum is larger than $0.6$ (i.e. $60\%$ of the parameter space range).

Tab.~\ref{tab:parameterization_synthetic} lists the parameters defining the synthetic optimization problems.

\begin{table}[htbp]
\centering
\scalebox{1.0}{
\begin{tabular}{|c|c|c|}
\hline
\multicolumn{1}{|c|}{\textbf{Parameter}}                             & \multicolumn{1}{l|}{\textbf{Value range}} & \multicolumn{1}{l|}{\textbf{Value(s) used}} \\ \hline
Domain                                                               & Bounded in $\mathbb{R}^2$                 & $[0, 1]\times [0, 1]$                    \\ \hline
\begin{tabular}[c]{@{}c@{}}Nominal barren \\
percentage\end{tabular}& 0\%-100\%                                 & $25\%, 50\%, 80\%$                       \\ \hline
\end{tabular}
}
\caption{Parameters defining the synthetic problems.}
    \label{tab:parameterization_synthetic}
\end{table}

\subsection{QAOA Experiments on the Max-Cut problem}
Given a weighted graph, the simplest Max-Cut problem consists of finding the optimum
partition of the graph into two sets of nodes that maximizes the total weight of the crossing edges.
 
In this section we use a Quantum Approximate Optimization Algorithm (QAOA) \cite{farhi2014quantum, farhi2016quantum} (see also \cite{zhou2020quantum} for a survey), a VQA proposed as an easier alternative than classical optimization to solve combinatorial optimization problems that are e.g. NP-hard, as is the case with the Max-Cut problem.

Specifically, in QAOA two variational parameter vectors are utilised, $\beta, \gamma \in \mathbb{R}^L$, which act on $L$ layers that make up a quantum circuit consisting of products of a mixer unitary, $U_M(\beta_l) = e^{-i \beta_l X}$, $\beta_l \in [0, \pi]$, and a problem unitary, $U_P(\gamma_l) = e^{-i \gamma_l f}$, $\gamma_l \in [0, 2\pi]$, $l = 1, \dots, L$. Here $i$ represents the imaginary unit, $X=\sum_j X_j$ denotes the sum of Pauli-$X$ gates applied to the different qubits indexed by $j \in \{1, \dots, n\}$, and $f$ is the cost function of the optimization problem to minimize, encoded by the QAOA representation. When the quantum circuit obtained by the QAOA encoding is fed with the uniform superposition (known as Hadamard operation) of the $n$-dimensional basis states of $n$ qubits, $\left|+\right\rangle=\frac{1}{\sqrt{2^n}} \sum_{z \in\{0,1\}^n}|z\rangle$, the quantum state obtained at the output is computed from the circuit unitaries as
$$
\left|\psi_{\gamma, \beta}\right\rangle=U_M( \beta_L) U_P(\gamma_L) \ldots U_M( \beta_1) U_P(\gamma_1)|+\rangle.
$$
An empirical estimate $\widehat{f}$ of the expectation value of the cost function $\langle \psi_{\gamma, \beta} | f | \psi_{\gamma, \beta}\rangle$ is then measured on a specified number of shots run on the quantum circuit, and the goal is to classically optimize the variational parameters that minimize the quantum-approximated expected cost.

The QAOA circuit approach to solve the Max-Cut problem \cite{farhi2014quantum, farhi2016quantum}  defines the number of qubits feeding the quantum circuit as the number of nodes in the weighted graph instance of the problem, and encodes the original maximization problem as the minimization of the QAOA-encoded cost function,
\begin{align*}
    f_{} = \sum_{(i, j) \in E} w_{i j} Z_i \otimes Z_j,
\end{align*}
where $w_{ij}$ is the weight of the edge connecting nodes $i$ and $j$
and $Z_i$ is the Pauli-$Z$ gate operating on qubit $i$, computed as $I \otimes \cdots I \otimes Z \otimes I \cdots \otimes I$, where $Z$ is placed at the $i$-th position and $I$ is the identity gate \cite{Hadfield2021representation}. 

The experiments were carried out on the QAOA approximation of the Max-Cut problem for four different connectedness levels: $25\%, 50\%, 75\%, 100\%$ on an 8-node graph with edge weights chosen uniformly at random between 0 and 1. The number of shots for the evaluation of the quantum circuit is 512.
Tab. \ref{tab:parameterization_maxcut} lists the parameters defining the Max-Cut optimization problems.
\begin{table}[htbp]
\centering
\scalebox{1.0}{
\begin{tabular}{|c|c|c|}
\hline
 \textbf{Parameter}&\textbf{\begin{tabular}[c]{@{}c@{}}Value \\
range\end{tabular}}        & \textbf{Value(s) used}  \\ \hline
 No. nodes (qubits) & $\geq 2$ & 8                    \\ \hline
 Edge weights       & $\mathbb{R}$ & Unif. dist. in $(0, 1)$ \\ \hline
 \begin{tabular}[c]{@{}c@{}}Graph \\
connectedness\end{tabular} & $> 0$     & 25\%, 50\%, 75\%, 100\%\\ \hline
 \begin{tabular}[c]{@{}c@{}}$L$: No. layers in \\
 QAOA circuit\end{tabular} & $\geq 1$ & \begin{tabular}[c]{@{}c@{}}1 (=> 2 optimization \\
 parameters)\end{tabular}\\ \hline
\end{tabular}
}
\caption{Parameters defining the Max-Cut problems.}
    \label{tab:parameterization_maxcut}
\end{table}



\subsection{Experimental results}\label{sec:results}
This section presents the experimental results obtained for the two problems presented in the previous sections: the synthetic case and the Max-Cut problem.

Each analyzed optimization method is run on each realized function until 50 optimization steps are reached, using $N=10$ FV particles for the FV-based methods. This implies that the ``SA-1'' method is run for 500 ($= 50 \times 10$) optimization steps. Ten replications of the optimization process (each replication runs on a different seed and on different ans\"atze) are run for each function realization.

In order to make results comparable across the two problems and across the different experiments, we define a performance metric that is independent of the global minimum value attained by the function and of the function range. The performance metric is thus defined as the minimum value found by the analyzed method on the function after standardization to the interval $[0, 1]$, as: $(f - f_{min}) / (f_{max} - f_{min})$, where $f_{min}$ and $f_{max}$ are respectively the true global minimum and maximum of the realized function, computed on a regular grid of $100 \times 100$ in the synthetic case, and of $50 \times 50$ in the Max-Cut case \footnote{Even though a natural performance metric would be the relative error between the minimum value found by the analyzed method and the true global minimum, it is a metric that is highly dependent on the global minimum value and the function range. For instance, a function with global minimum equal to 1.0 and global maximum equal to 1.5 will most likely allow the optimization methods reach a smaller relative error than the one obtained on a function with the same global minimum but a global maximum equal to 10. In fact, the maximum possible error for the latter function is 900\%, whereas it is just 50\% for the former function. Therefore, the relative error metric is not comparable across these two functions.}. The best possible performance is thus 0 and the worst is 1 \footnote{In certain cases, the best performance could be slightly negative, due to the fact that the true global minimum is computed on a regular grid, whereas the estimated global minimum is not restricted to the grid.}.

The performance metric is analyzed in terms of the amount of barren plateaus present in the optimization function, estimated using the procedure described in App.~\ref{app:actual_barren_percentage_estimation} as a percentage of the domain size.
Note that, despite setting the desired barren level in the synthetic function, the actual barren percentage of the realized function may be substantially different from the nominal value, due to the smoothing step of the function synthesis procedure
\footnote{See the end of App.~\ref{app:synthetic_function_generation} for a restriction imposed on the relationship between the estimated barren percentage and the nominal one to accept the generated function as a valid synthetic function to be used in experiments.}. For the final analysis of performance, the estimated barren percentages are used to group the generated functions into one of the following three barren levels: LOW [0\%, 33\%], MEDIUM (33\%, 66\%], and
HIGH (66\%, 100\%), which are analyzed separately.

Two types of performance plots on the selected metric are created: violin plots and convergence speed plots. Fig.~\ref{fig:results_violin_by_barren_level} presents violin plots of the standardized minimum value found by each method across the three different barren levels, while Fig.~\ref{fig:results_violin_overall} presents the same results summarized over all experiments regardless of barren level.
Fig.~\ref{fig:results_speed} can be used to compare the speeds with which each algorithm reaches its estimated minimum. The plots in the latter figure are generated as follows: first, to average out the randomness coming from the ten replications run on each function instance, a summary statistic of the minimum function value found at each optimization step is computed over all such replications; then, the distribution of this summary statistic is computed over all functions assigned to each barren level. For robustness, the chosen summary statistic is the median.

We observe that FV-based methods tend to outperform vanilla SA, both in terms of the minimum function value found and its variability (Figs.~\ref{fig:results_violin_by_barren_level} and \ref{fig:results_violin_overall}), and in terms of convergence speed to such minimum (Fig.~\ref{fig:results_speed}), specially when compared against the ``SA-1'' method. We note that all methods perform better in the Max-Cut problem than in the synthetic case, both in terms of smaller minimum value found and smaller variability of the minimum.
The smaller variability is most likely due to high similarities among Max-Cut instances, whereas the synthetic functions may have very different landscapes (see examples in Figs.~\ref{fig:results_synthetic_instances} and \ref{fig:results_maxcut_instances} of App.~\ref{app:actual_barren_percentage_estimation}). In both cases, methods tend to perform from better to worse in the order they are laid out in the plots, from ``FV-100\%'' to ``SA-1''. The FV advantage over SA tends to be larger for higher barrenness, as measured by the median performance. This is specially true in the synthetic case, where the advantage in median value between ``FV-100\%'' and ``SA-1'' goes from $\sim 0.2$ for the LOW barren level, to $\sim 0.3$ for the MEDIUM barren level, to $\sim 0.5$ for the HIGH barren level, all of which values are to be compared with the maximum possible advantage of $1$.
We also observe that FV with $100\%$ exploration rate tends to perform slightly better than FV with $50\%$ exploration rate.

In Fig.~\ref{fig:results_speed} we observe a noticeable faster convergence of the FV methods to their respective found minimum, specially in the synthetic case, where in median the two green curves are clearly below the two red curves. This is specially the case in the HIGH barren level group where both FV methods reach almost their respective optimum half way, i.e., at iteration $\sim 25$, in median. On the other hand, in the Max-Cut problem, the speed observed in FV-based methods is very similar to the speed of the ``SA-N'' method, although convergence is slightly faster in the HIGH barren level group where, in median, they reach the global minimum at iteration $\sim 20$, compared to iteration $50$ for ``SA-N''. In all cases, ``SA-1'' stays in median $\sim 0.2$ standardized units away from the global minimum.

Fig.~\ref{fig:results_maxcut_trajectories} was produced as an illustration of the trajectories of the FV-based methods and their comparison with the SA trajectories.
We see how each ansatz of the SA algorithm stays exploring around the same area, trapped by barren plateaus, whereas FV jumps into other possibly more interesting regions and finally is able to find the global minimum.

\begin{figure}[htbp]
    \centering
    \begin{minipage}{\columnwidth}
        \captionsetup{justification=centering}
        \subfloat[\footnotesize Synthetic case]{\includegraphics[width=\linewidth]{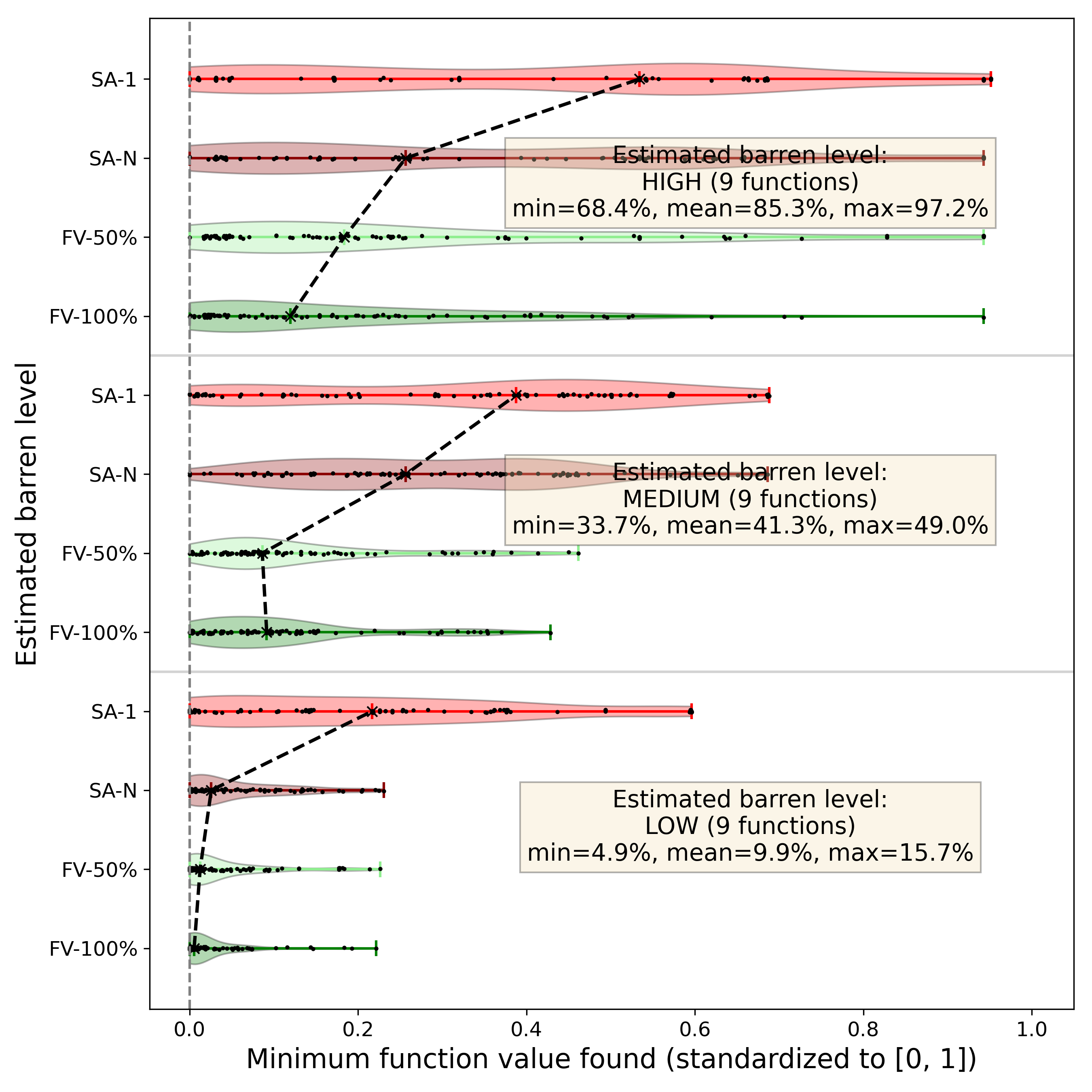}}
    \end{minipage}
    \begin{minipage}{\columnwidth}
        \captionsetup{justification=centering}
        \subfloat[\footnotesize Max-Cut problem]{\includegraphics[width=\linewidth]{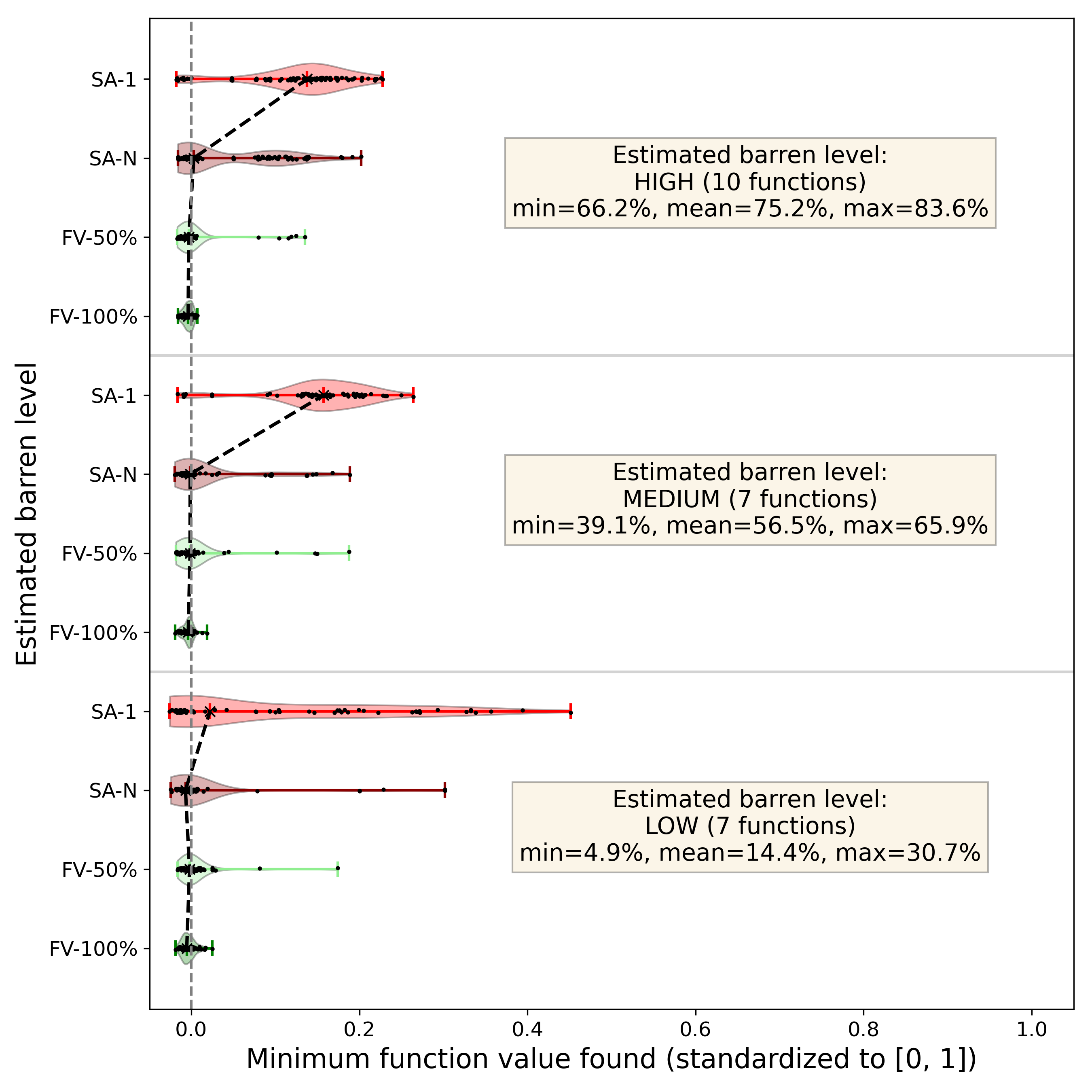}}
    \end{minipage}
    \caption{Violin plots by estimated barren level of the minimum value found by each method for the function instances standardized to the $[0, 1]$ interval. (a) Synthetic case with 27 instances, (b) Max-Cut problem with 24 instances. The range of estimated barren percentages for the functions in each barren level is indicated in the insets. The black line in each group connects the medians. Lower values (more to the left) are better. Negative estimated function minima may arise due to the approximate computation of the true global minimum on a regular grid, whereas the minimum obtained by the optimization methods is not restricted to the grid.}
    \label{fig:results_violin_by_barren_level}
\end{figure}

\begin{figure}[htbp]
    \centering
    \begin{minipage}{\columnwidth}
        \centering
        \captionsetup{justification=centering}
        \subfloat[\footnotesize Synthetic case]{\includegraphics[height=1.15\linewidth]{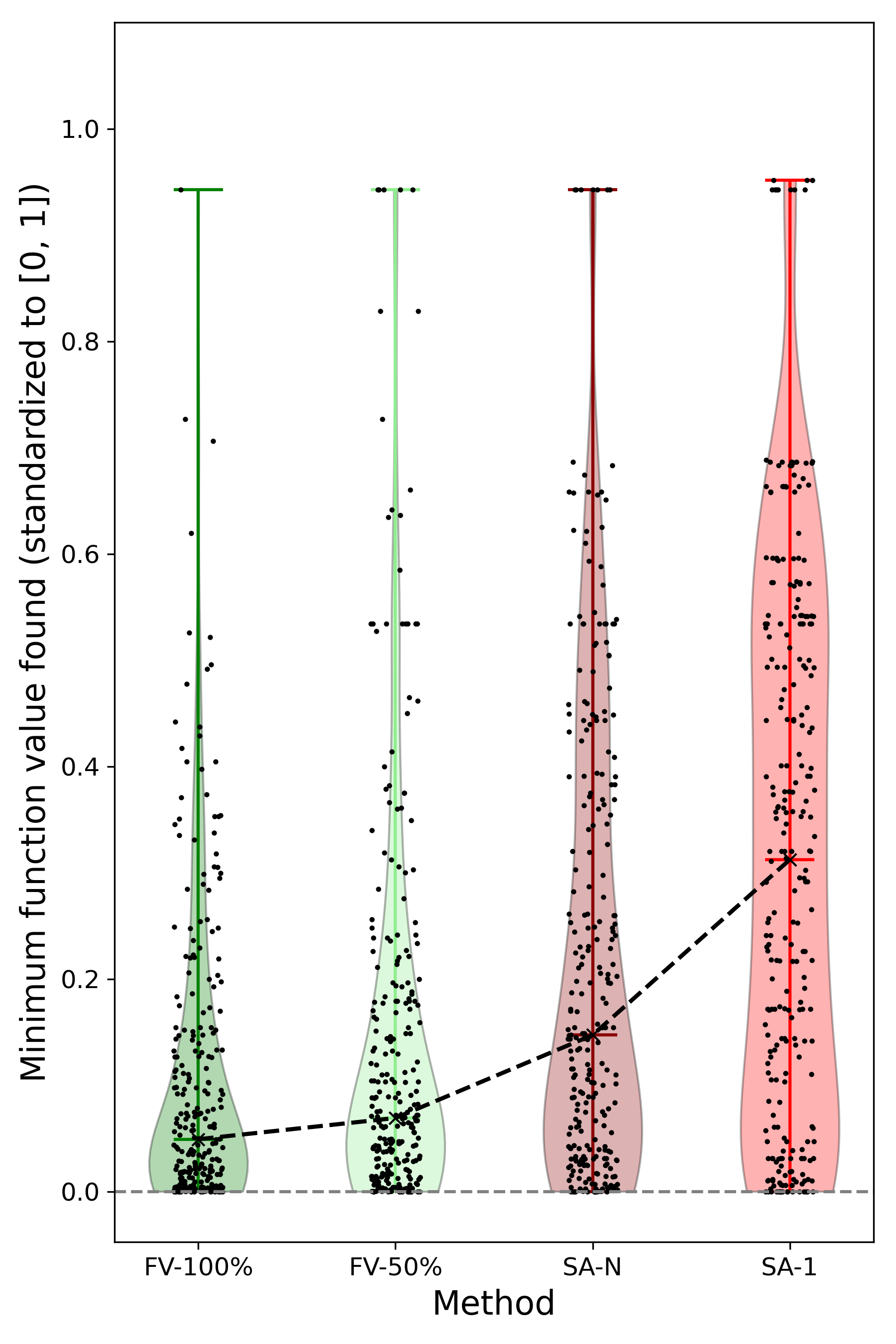}}
    \end{minipage}
    \begin{minipage}{\columnwidth}
        \centering
        \captionsetup{justification=centering}
        \subfloat[\footnotesize Max-Cut problem]{\includegraphics[height=1.15\linewidth]{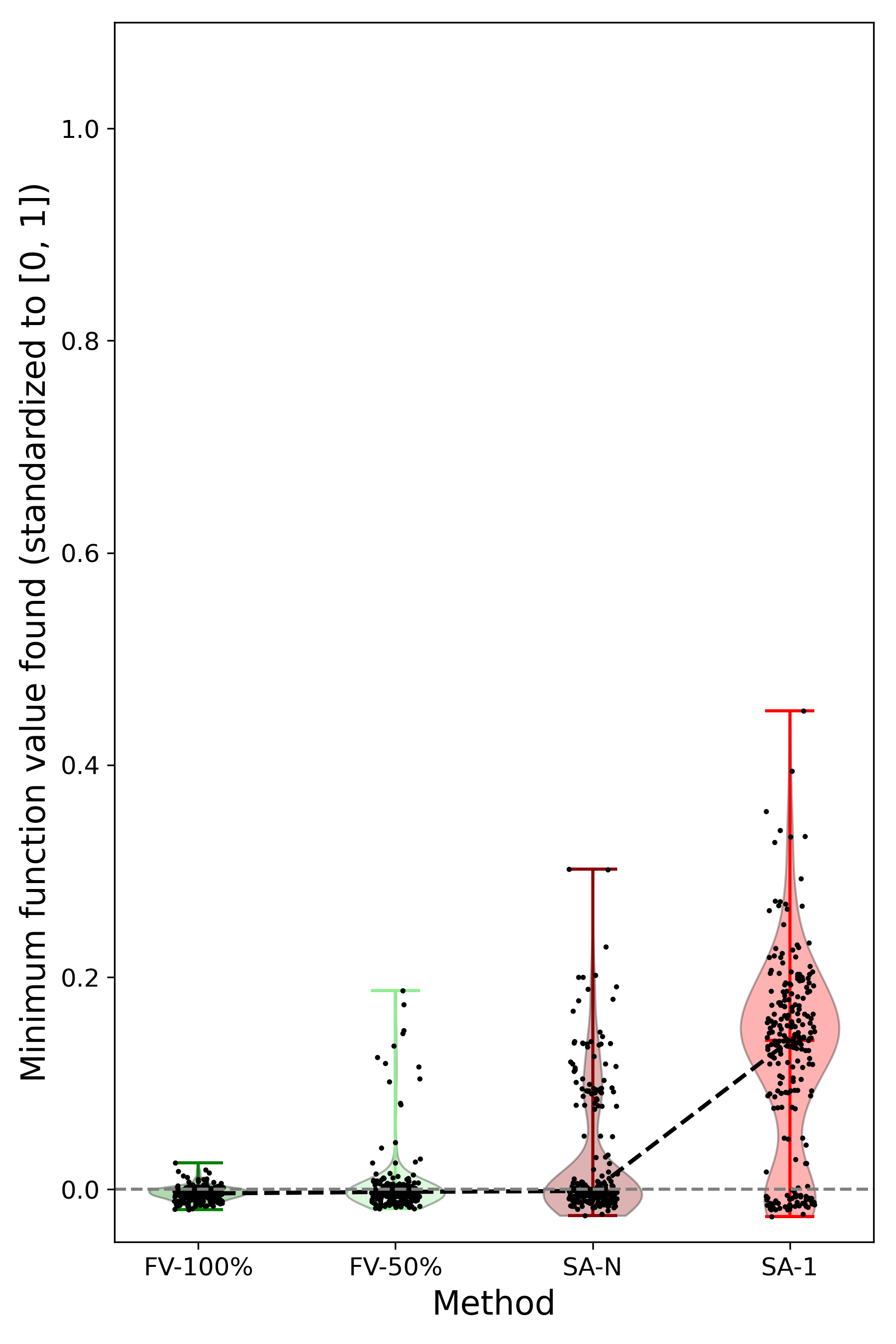}}
    \end{minipage}
    \caption{Violin plots over all analyzed instances of the minimum value found by each method for the function instances standardized to the $[0, 1]$ interval. (a) Synthetic case with 27 instances, (b) Max-Cut problem with 24 instances. Lower position of the violins is better. Negative estimated function minima may arise due to the approximate computation of the true global minimum on a regular grid, whereas the minimum obtained by the optimization methods is not restricted to the grid.}
    \label{fig:results_violin_overall}
\end{figure}

\begin{figure}[htbp]
    \centering
    \begin{minipage}{\columnwidth}
        \centering
        \captionsetup{justification=centering}
        \subfloat[\footnotesize Synthetic case]{\includegraphics[height=1.15\linewidth]{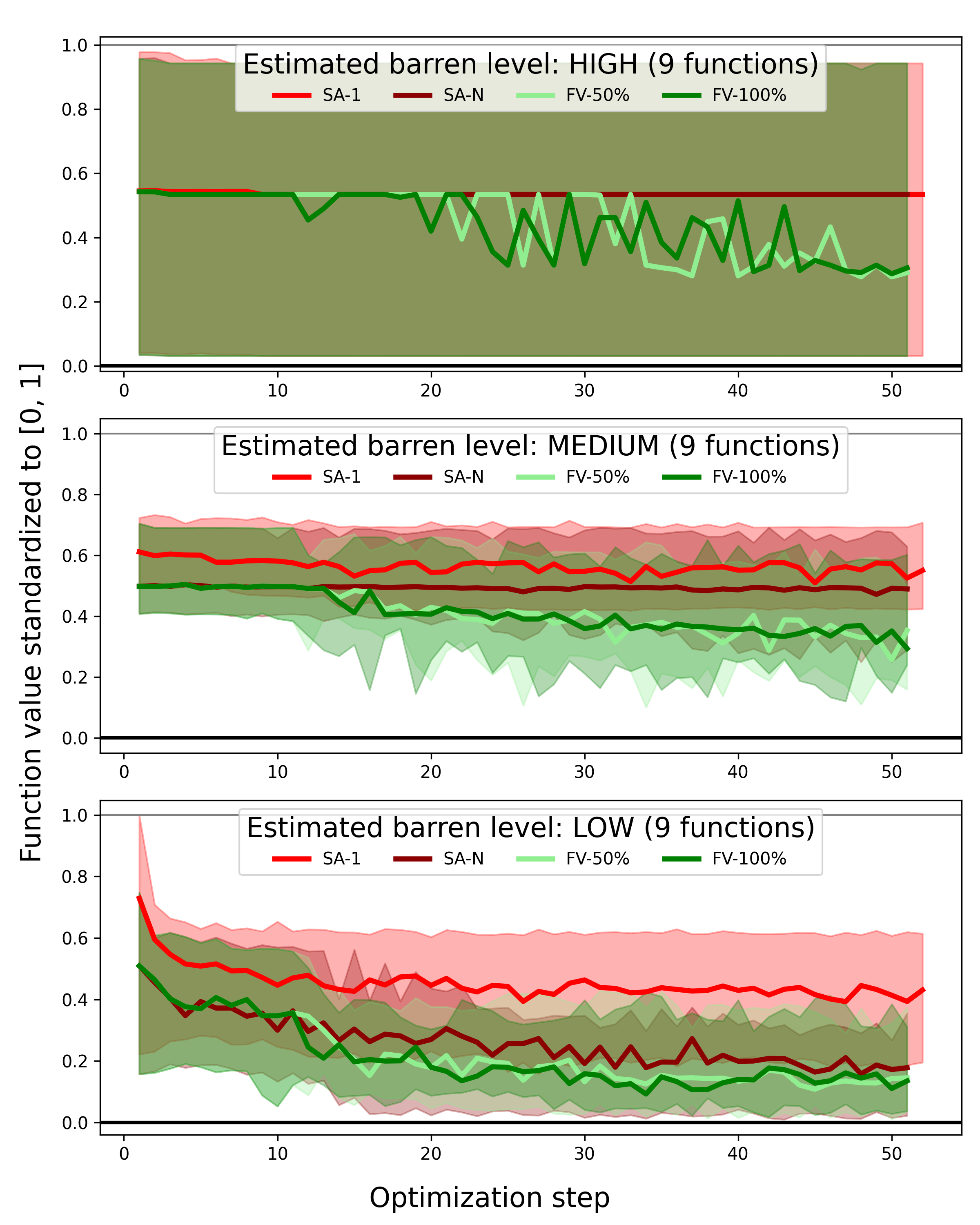}}
    \end{minipage}
    \begin{minipage}{\columnwidth}
        \centering
        \captionsetup{justification=centering}
        \subfloat[\footnotesize Max-Cut problem]{\includegraphics[height=1.15\linewidth]{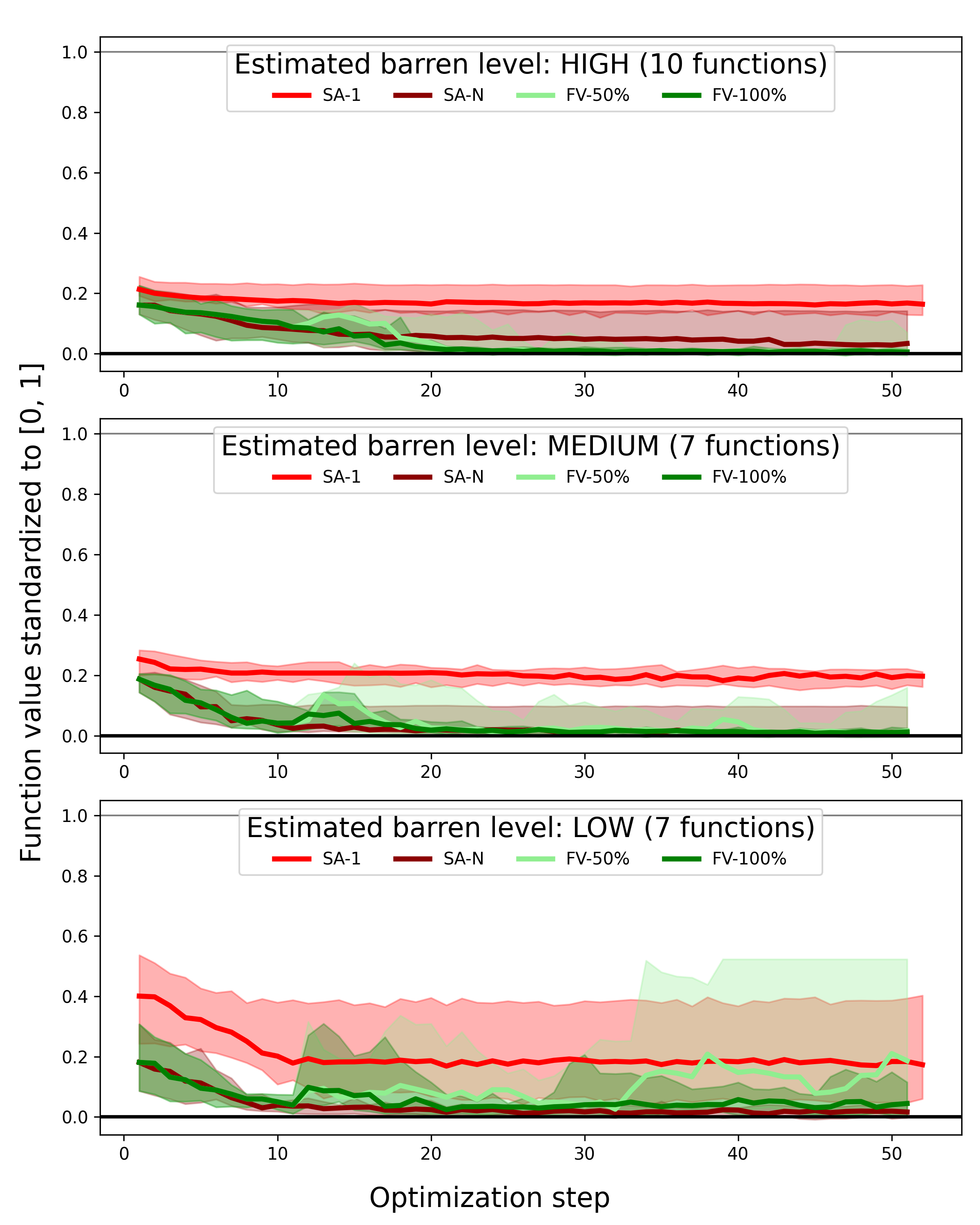}}
    \end{minipage}
    \caption{Speed by estimated barren level with which each method reaches the estimated function minimum on the function instances standardized to the $[0, 1]$ interval. (a) Synthetic case with 27 instances, (b) Max-Cut problem with 24 instances. Each panel shows the distribution of the median, computed as described in Section~\ref{sec:results}.
    The thick coloured curves represent the median of the distribution, and the bands span the distribution's min-max range. Lower positions of the curves are better. 
    }
    \label{fig:results_speed}
\end{figure}

\begin{figure}[htb]
    \centering
    \includegraphics[width=1\linewidth,height=1\linewidth]{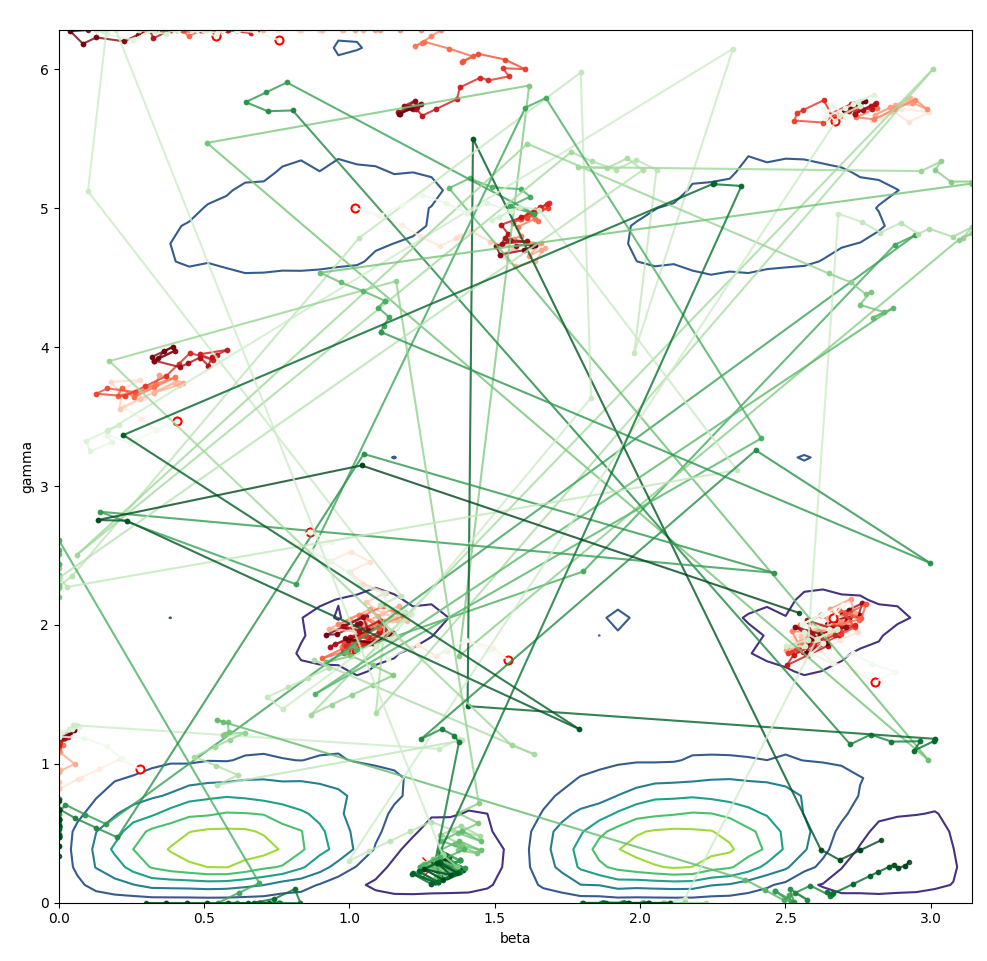}
    \caption{Trajectories of the 10 ans\"atze used in one replication of the ``SA-N'' (red) and ``FV-100\%'' (green) algorithms overlaid on the contour plot of a Max-Cut instance. The color of each trajectory intensifies as the optimization progresses. The global optimum, which occurs at $\sim (1.3, 0.25)$ at the bottom of the domain, is found by ``FV-100\%'' but not by ``SA-N''.}
    \label{fig:results_maxcut_trajectories}
\end{figure}

\section{Summary and conclusions}\label{sec:sumconc}
We have explored a novel approach to enhancing variational quantum algorithms, while taking inspiration from the Fleming-Viot stochastic process, traditionally applied in modeling biological evolution. Our primary objective was to address the challenge of barren plateaus in variational quantum algorithms, i.e. the regions in the parameter space where gradients are very small or excessively noisy, impeding effective optimization.

We proposed a parallel implementation of the classical learning step in variational quantum algorithms, utilizing the concept of multiple searches (termed particles) that are terminated upon encountering barren plateaus and then regenerated
(using either an exploitation or an exploration strategy) into potentially more fruitful areas of the parameter space. This strategy tends to take the search for the global optimum away from barren plateaus, thus improving the overall efficiency of variational quantum algorithms.

Besides theoretically proving that the time to reach the global minimum is shorter than plain simulated annealing, our methodology was tested through numerical experiments on synthetic problems and instances of the Max-Cut problem on small graphs. The results demonstrated that our approach tends to outperform  simulated annealing, both in terms of reaching closer to the true global minimum and in terms of a smaller variability of the optimization process result. While the advantage of reaching closer to the true global minimum was more strongly observed in the synthetic experiments, the smaller variability was more strongly observed in the Max-Cut problem.
We also note that the synthetic functions test bench allowed us to analyze the performance of the proposed Fleming-Viot method as a function of the barren areas percentage, corroborating our analytical results showing that the advantage over simulated annealing becomes larger as the barren level increases.
Experiments also suggest that the exploration strategy for particle regeneration is more important than the exploitation strategy in terms of algorithm performance. We believe that this approach not only enhances the optimization process but also offers a fresh perspective on tackling the inherent challenges in quantum computing optimizations.

Our research contributes to the ongoing efforts in quantum computing to realize the potential of variational quantum algorithms for near-term advantage, especially in the face of obstacles like barren plateaus. We believe that the insights gained from our work will help open new research directions in near-term quantum algorithms, particularly in developing more robust and efficient classical optimization strategies for variational quantum algorithms.

\paragraph*{Acknowledgements}
We would like to thank Matthieu Jonckheere, from CNRS-LAAS in Toulouse, France, for the insights received during the development of the algorithm.
Jakub would like to acknowledge many discussions with Andrew J. Parkes up to 2012 concerning landscapes and algorithm analysis. This work has been partially supported by project MIS 5154714 of the National Recovery and Resilience Plan Greece 2.0 funded by the European Union under the NextGenerationEU Program. 

\paragraph*{Disclaimer} This paper was prepared for information purposes,
and is not a product of HSBC Bank Plc. or its affiliates.
Neither HSBC Bank Plc. nor any of its affiliates make
any explicit or implied representation or warranty and
none of them accept any liability in connection with
this paper, including, but limited to, the completeness,
accuracy, reliability of information contained herein and
the potential legal, compliance, tax or accounting effects
thereof. Copyright HSBC Group 2023.

\newpage

\newpage
\appendix

\section{Fleming-Viot optimization algorithm}
\label{app:algorithm}

\begin{algorithm}[H]
\caption{FV parallel optimization algorithm}\label{alg:FV}
\begin{algorithmic}[1]  

\Require 
$T$: the number of classical optimization iterations,

$N$: the number of Fleming-Viot particles,
        
$B$: the burn-in time steps used to compute the reference gradient norm involved in the particle's killing (a.k.a. absorption) condition,

$W$: the size of the most recent time window on which the average gradient is computed,

$\alpha$: the relative gradient threshold for absorption of a particle,

$\varepsilon$: the exploration rate.

\Ensure 
    Estimate of the parameter vector minimizing the system energy: \State $\text{avg\_abs\_gradient} \gets 0$

\For{$i = 1$ to $B$}
    \State \Call{OptimizeAndRecordGradient}{}
    \State Update $\text{avg\_abs\_gradient}$
\EndFor

\State $\text{ref\_gradient} \gets \alpha \times \text{avg\_abs\_gradient}$

\For{$i = 1$ to $T$}
    \State \Call{OptimizeAndRecordGradient}{}
    \State \Call{CheckAbsorptionAndRegenerateParticles}{}
\EndFor

\end{algorithmic}
\end{algorithm}

\begin{algorithm}[th]
\begin{algorithmic}

\Procedure{OptimizeAndRecordGradient}{}
    \For{$j = 1$ to $N$}
        \State Perform optimization step for particle $j$
        \State Record observed gradient for particle $j$
    \EndFor
\EndProcedure

\medskip
\Procedure{CheckAbsorptionAndRegenerateParticles}{}
    \For{$j = 1$ to $N$}
        \If{avg gradient of particle $j$ over last $W$ steps < $\text{ref\_gradient}$}
            \State $\text{rnd} \sim \mathcal{U}[0, 1]$
            \If{$\text{rnd} < \epsilon$} \State \Call{ReinitializeParticle}{j}
            \Else \State \Call{ReactivateParticle}{j}
            \EndIf
        \EndIf
    \EndFor
\EndProcedure

\medskip
\Procedure{ReinitializeParticle}{j}
    \State Regenerate particle $j$ to a randomly chosen point in the parameter space
    \State Reset the history of recorded gradient values for particle $j$
\EndProcedure

\medskip
\Procedure{ReactivateParticle}{j}
    \State Regenerate particle $j$ to a randomly chosen particle ($r$) among non-killed particles
    \State Assign the last $W$ gradient values of particle $r$ as the last $W$ gradient values of particle $j$
    \State Reset the learning rate of particle $j$ to the initial learning rate of particle $r$ with the goal of increasing the divergence between the two particles, thus boosting exploration
\EndProcedure

\end{algorithmic}
\end{algorithm}


\section{Procedure to generate 2D smooth synthetic functions}
\label{app:synthetic_function_generation}
 The following procedure was used to generate the synthetic functions that serve as test bench for the proposed optimization algorithm. Define:
\begin{itemize}
    \item a 2D rectangular parameter domain on which the synthetic function will be generated (e.g. $[0, 1] \times [0, 1]$), having an area of $A$ grid points,
    \item the number of grid points to consider in each 2D dimension on which the synthetic function $f$ will be defined (e.g. 100 points),
    \item the desired nominal percentage $P$ of barren plateaus to be present in the function,
    \item the desired number $B$ of disjoint barren plateaus,
    \item the width of the border margin, i.e. the margin to leave between the border of the domain and the barren plateaus (e.g. 2 grid points),
    \item the number $M$ of sample points outside barren areas where desired function values are defined (e.g. 100),
    \item the smoothing parameter for the smoothing spline interpolation used to define the final synthetic function value $f$ at each grid point (e.g. 10),
\end{itemize}
the synthetic process consists of the following main steps:
\begin{enumerate}[label={\Alph*)}]
\item Generate barren areas covering at least the nominal percentage of barren plateaus in the domain and sample all grid points in them.
\item Sample $M$ points outside barren areas. 
\item Sample the nominal position of the global minimum.
\item Generate the synthetic function by a smooth interpolation of all sampled points.
\item Estimate the actual barren percentage of the generated function.
\item Check the generated function on quality criteria which define its acceptance or rejection for use in the optimization experiments.
\end{enumerate}

Step (A) consists of the following iterative process:
\begin{enumerate}[label={(\arabic*)}]
    \item Set the value of the function at the grid points in the border margin to a value a little larger than 1 (e.g. 2), so that the global minimum of the function is less likely to be located at the border, which is an undesired situation\footnote{\label{fnt:border_minimum_situation}The situation is undesired because it artificially helps the optimization algorithm find the global optimum. This aid is due to the fact that the parameter estimate at each optimization step is clipped to the domain. Therefore, if after a particular step the parameter estimate happens to fall outside the domain, its location might end up being close to the global optimum just because of clipping.}.
    \item Randomly sample a point among the grid points in a subset of the domain that guarantees there is no overlap between the barren plateaus and the border margin; this point represents the center of a barren plateau.
    \item Construct a square centered at the sampled point with side length equal to $\lfloor\frac{\sqrt{P A / B}}{2}\rfloor$, which represents the new proposed barren plateau.
    \item Sample a value from a uniformly distributed random variable in $[0.2, 0.8]$ (note a margin is left from both $0$ and $1$) and assign it to each grid point in the proposed barren plateau.
    \item Check overlap between the newly constructed square and other barren plateaus previously constructed. If there is overlap, the regions are merged into a single barren area. The function values assigned to each point in each of the regions before merging are maintained.
    \item Compute the current barren plateau percentage as the percent of grid points in the barren areas constructed so far and the total number of grid points in the domain. If this percentage is smaller than the nominal percentage of barren plateau, go to step (2).
\end{enumerate}

Step (B) consists of uniformly sampling $M$ grid points in non-barren areas and assigning uniformly distributed random values in $[0.2, 0.8]$ as function values, i.e. the same distribution used for the function values in barren plateaus.

Step (C) consists of uniformly sampling one grid point in non-barren areas and assigning zero as the function value. This is considered the nominal position of the global minimum.

Step (D) consists of fitting a 2D smoothing cubic spline (using the selected smoothing parameter) to all the sampled function values generated above: (i) the values assigned to the grid points in barren areas, (ii) the value assigned to the grid points at the four borders of the domain, (iii) the $M$ function values in non-barren areas, (iv) the nominal global minimum. The value of the smoothing spline fit in each grid point $(x_i, y_i)$ of the domain is assigned as the synthetic function value $f(x_i, y_i)$. Note that, after this smoothing interpolation step, both the position and value of the global minimum may not be equal to their selected nominal values, and the same is true for the size and location of the nominal barren areas.

Step (E) consists of estimating the actual percentage of barren plateaus present in the synthesized function using the procedure described in App.~\ref{app:actual_barren_percentage_estimation}. The actual barren percentage is likely to be substantially smaller than the nominal barren percentage due to the final smoothing interpolation step.

Step (F) consists of running the following quality checks on the generated function $f$, whose goal is, on one hand (checks QC (1) and QC (2)), to keep function characteristics other than the percentage of barren plateaus as homogeneous as possible --as we want to concentrate on the analysis of the impact of barren plateaus percentage on the method performances-- and, on the other hand (check QC (3)), to obtain an analysis in terms of barren levels that is as clean as possible.

QC (1): The global minimum does not fall at a point in the border of the domain\footnote{Recall footnote~\ref{fnt:border_minimum_situation}.}.

QC (2): The function values are between -50 and 50, in order to avoid functions that vary too wildly.

QC (3): For nominal barren percentages $B$ smaller than the left-end of the HIGH barren level (typically 66\%), the estimated barren percentage is at most equal to its nominal value. The goal is two fold: (i) use clearly distinct function instances in terms of barren levels for the analysis of the method performances\footnote{To illustrate what we mean by ``clearly distinct function instances'': suppose 66\% is the boundary between the MEDIUM barren level and the HIGH barren level. In that case, we would like to avoid including functions in the analysis that have, e.g. 65\% of estimated barren percentage (in which case the function will belong to the MEDIUM barren level group), and e.g. 67\% of estimated barren percentage (in which case the function will be belong to the HIGH barren level group). If we accept those cases, we would include functions in the analysis that have very similar barren percentage but are classified into two different barren level groups, potentially generating undesired distortions in the results.}; (ii) use function instances on which we place enough confidence on the accuracy of the estimated barren percentage because, as mentioned above, the actual barren percentage is expected to be smaller than the nominal barren percentage, due to smoothing\footnote{This, however, is less the case for large nominal barren percentages such as 80\%, where realized functions are likely to have more barren areas than expected.}.

If any of the above quality checks fails, the generated synthetic function is discarded and the whole synthesis process is run again, until the generated synthetic function is considered valid in terms of these quality standards.


\begin{figure}[tb]
    \centering
    \begin{minipage}{\columnwidth}
        \centering
        \subfloat[\footnotesize Synthetic function instance with LOW barren level (nominal barrenness: 25\%, estimated: 11\%)]{\includegraphics[width=1\linewidth,height=0.5\linewidth]{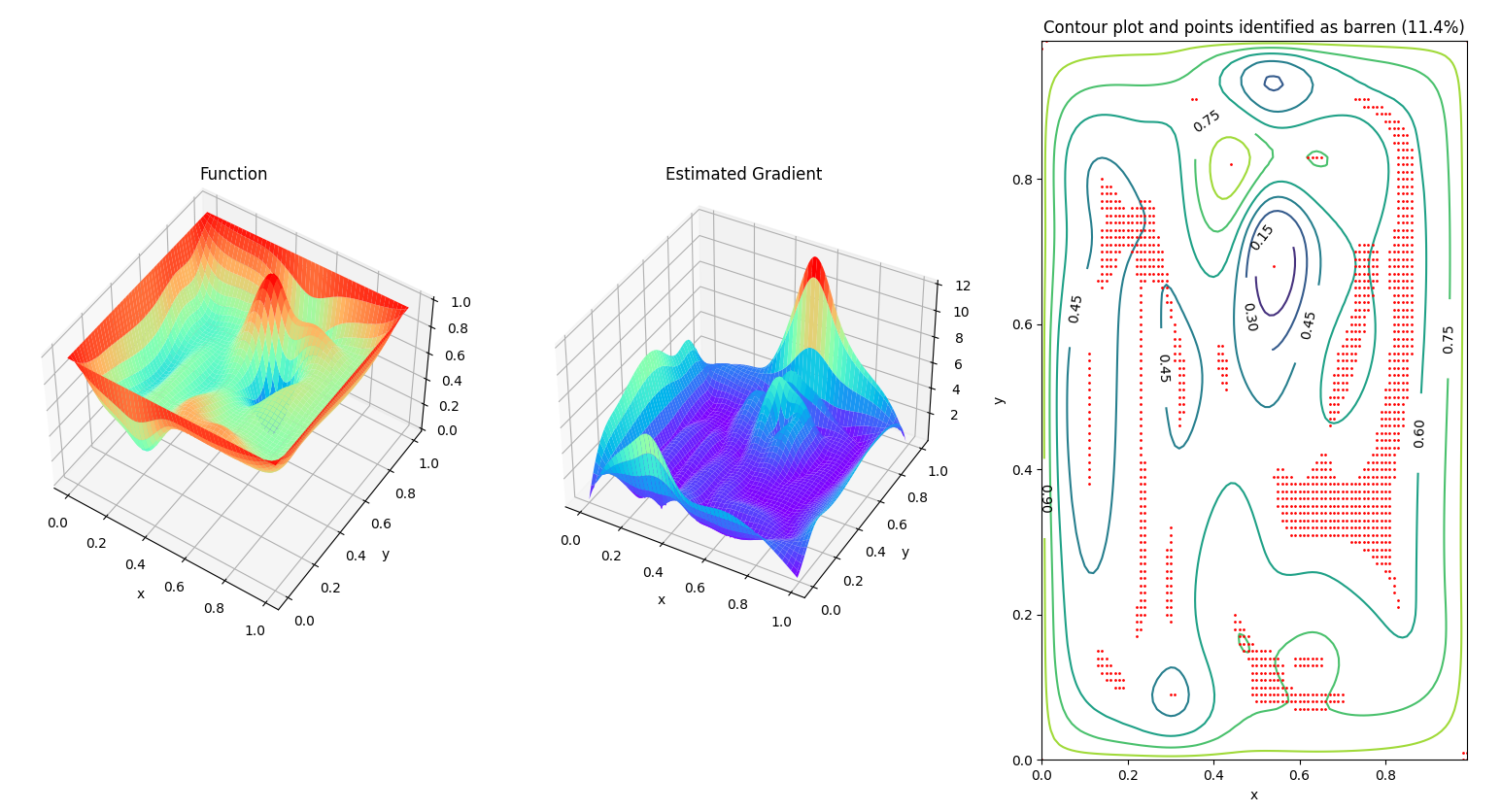}}
    \end{minipage}
    \begin{minipage}{\columnwidth}
        \centering
        \subfloat[\footnotesize Synthetic function instance with MEDIUM barren level (nominal barrenness: 50\%, estimated: 41\%)]{\includegraphics[width=1\linewidth,height=0.5\linewidth]{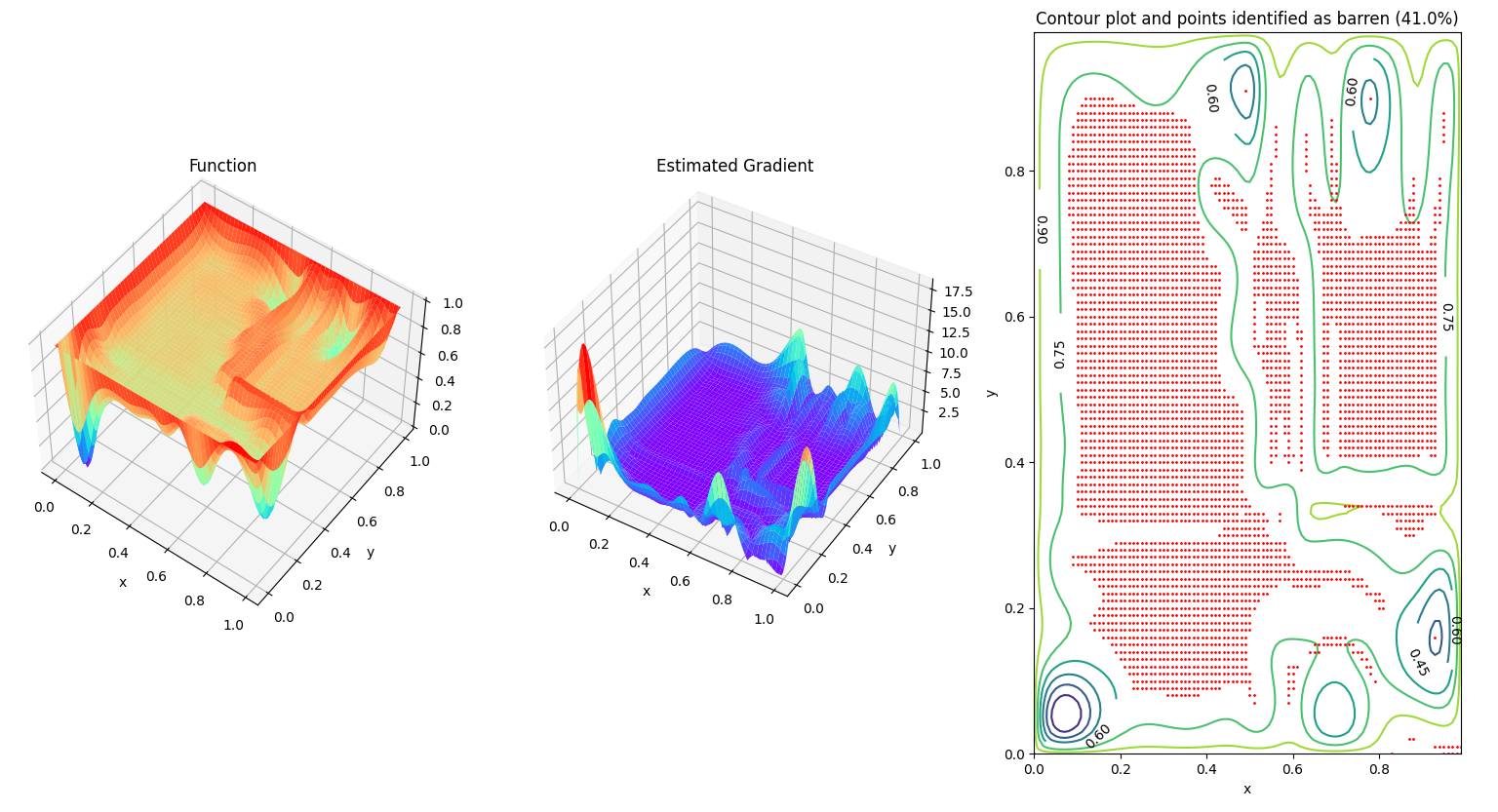}}
    \end{minipage}
    \begin{minipage}{\columnwidth}
        \centering
        \subfloat[\footnotesize Synthetic function instance with HIGH barren level (nominal barrenness: 80\%, estimated: 70\%)]{\includegraphics[width=1\linewidth,height=0.5\linewidth]{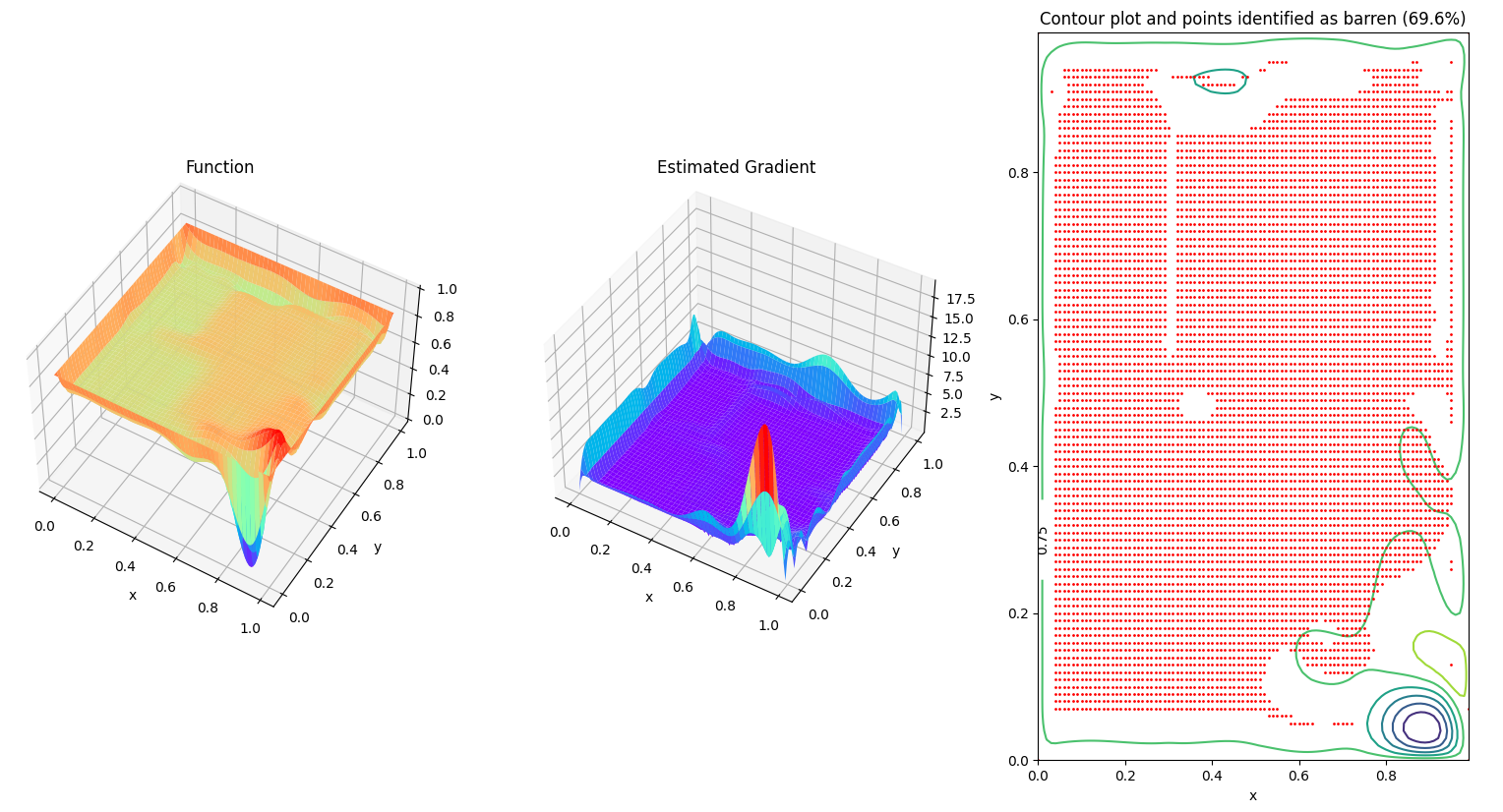}}
    \end{minipage}
    \caption{Examples of synthetic function instances assigned to each of the three possible barren levels: (a) LOW, (b) MEDIUM, (c) HIGH. Left: function landscape; center: gradient norm; right: contour plot of function with overlaid red points indicating areas identified as barren. All functions are evaluated on a $100 \times 100$ regular grid.}
    \label{fig:results_synthetic_instances}
\end{figure}

\begin{figure}[tb]
    \centering
    \begin{minipage}{\columnwidth}
        \centering
        \subfloat[\footnotesize Max-Cut instance with LOW barren level (graph connectedness: 25\%, estimated barrenness: 11\%)]{\includegraphics[width=1\linewidth,height=0.5\linewidth]{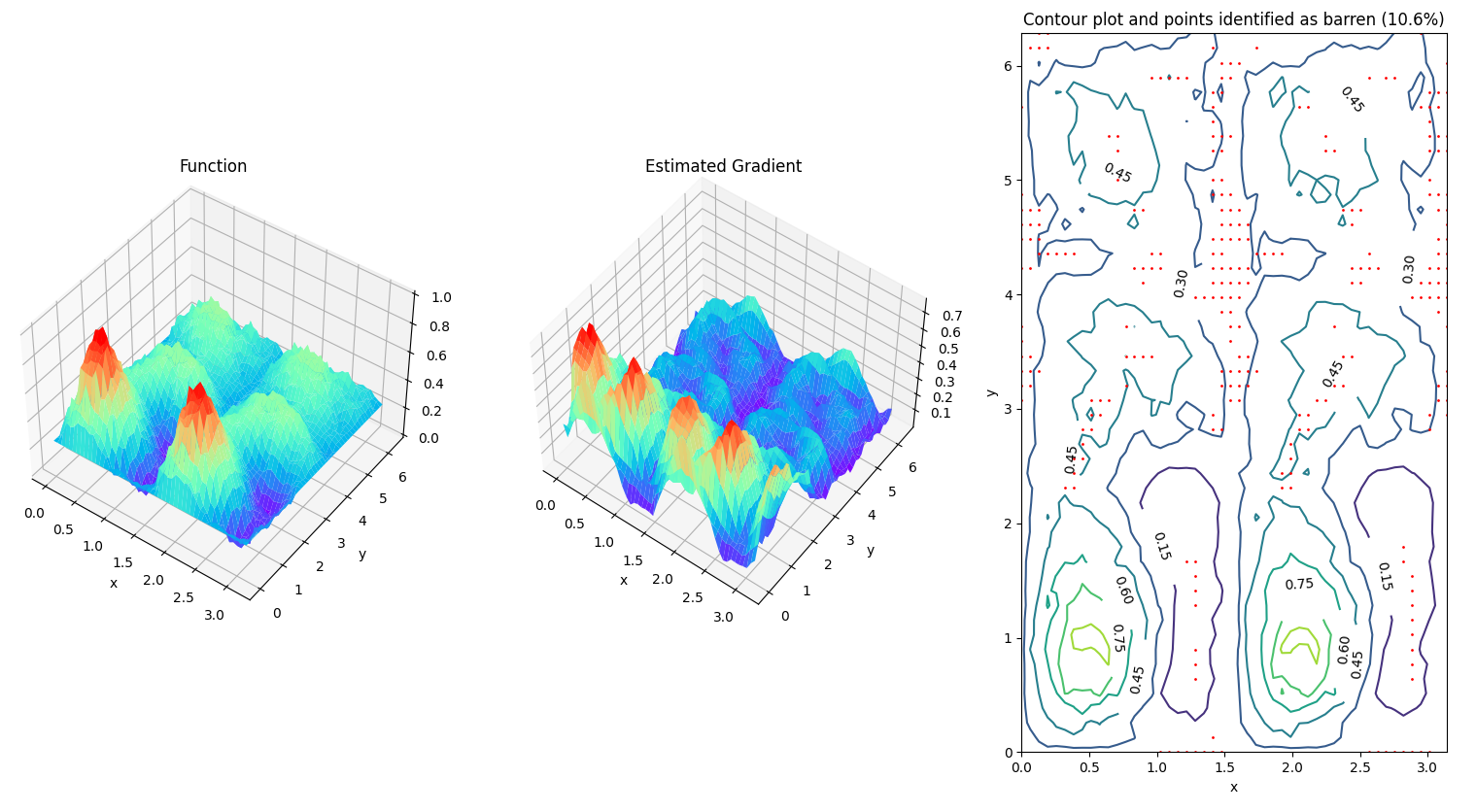}}
    \end{minipage}
    \begin{minipage}{\columnwidth}
        \centering
        \subfloat[\footnotesize Max-Cut instance with MEDIUM barren level (graph connectedness: 50\%, estimated barrenness: 49\%)]{\includegraphics[width=1\linewidth,height=0.5\linewidth]{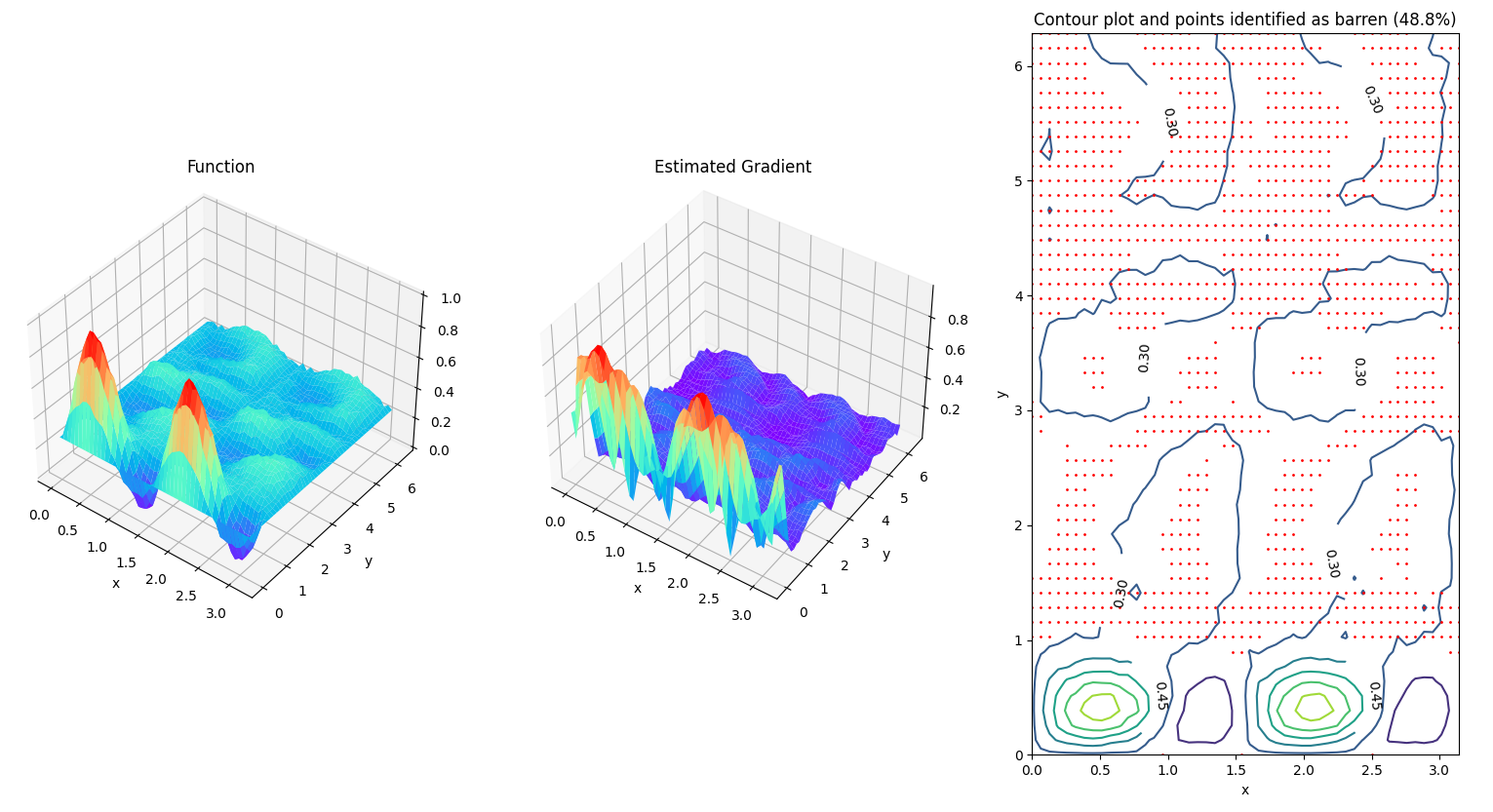}}
    \end{minipage}
    \begin{minipage}{\columnwidth}
        \centering
        \subfloat[\footnotesize Max-Cut instance with HIGH barren level (graph connectedness: 100\%, estimated barrenness: 79\%)]{\includegraphics[width=1\linewidth,height=0.5\linewidth]{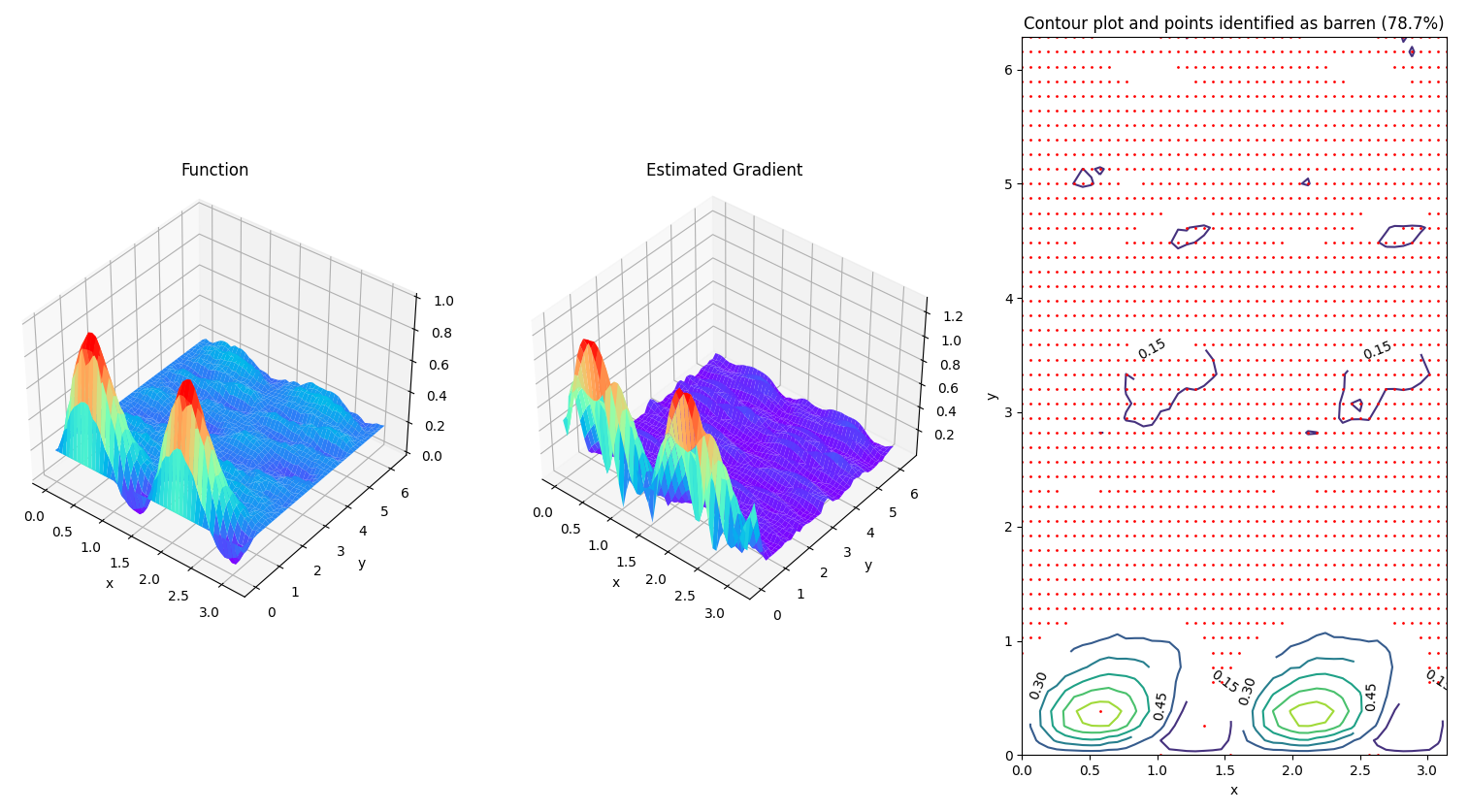}}
    \end{minipage}
    \caption{Examples of Max-Cut instances assigned to each of the three possible barren levels: (a) LOW, (b) MEDIUM, (c) HIGH. Left: function landscape; center: gradient norm based on 9x9 sliding window; right: contour plot of function with overlaid red points indicating areas identified as barren. All functions are evaluated on a $50 \times 50$ regular grid.}
    \label{fig:results_maxcut_instances}
\end{figure}

\section{Method to estimate the barren percentage}
\label{app:actual_barren_percentage_estimation}
The proposed method to estimate the percentage of barren areas in a 2D function consists of the following steps:
\begin{enumerate}
    \item   Define a regular grid on the parameter space, of say 100 points in each parameter dimension. We call this the evaluation grid.
    \item Classify each point in the evaluation grid as part of a barren area or not.
    \item Estimate the barren percentage as the percent of points classified as part of a barren area.
\end{enumerate}

Clearly, step (2) is the crucial step of the method. We have devised two classification methods, depending on the level of smoothness of the analyzed function, as follows:

\begin{enumerate}[label={\alph*)}]
    \item if the function is fairly smooth, as is the case with the functions used for the synthetic experiments presented in the paper, the gradient norm is computed separately at each grid point and assigned to it.

    \item if the function shows roughness, as is the case with the Max-Cut instances of the QAOA experiments, the average gradient vector is computed on a sliding window and its norm assigned to the grid point at the center of the window. The logic behind this approach is that, when rather noisy functions are to be optimized, barrenness is mostly due to a small and noisy gradient, which makes the optimizer move in very different directions but essentially stay around the same area.
\end{enumerate}

In both cases, a grid point is classified as part of a barren area when its assigned gradient norm is smaller than a threshold derived from its distribution over all grid points. Since there is no clear statistical test that can be used to choose the threshold, we chose it heuristically as the right-end of the first bin of the histogram of the gradient norm constructed on $\lfloor\sqrt{n}/4\rfloor$ bins, where $n$ is the number of points in the evaluation grid. The number of histogram bins was chosen by trial and error as a fraction of the square root of the number of points and by inspection of the results of the barren area classification process.

Figs.~\ref{fig:results_synthetic_instances} and \ref{fig:results_maxcut_instances} show the results of the barren areas classification on three function instances of each of the two optimization problem analyzed in this work; each instance corresponds to a different estimated barren level, as defined in the main text (see Section~\ref{sec:num_exp_synthetic}): LOW, MEDIUM and HIGH. We observe that the classification process yields reasonable results, in the sense that points classified as part of barren areas generate cohesive areas and their location tends to be consistent with contour lines.

\parbox[c][\linewidth][c]{\linewidth}


\bibliographystyle{IEEEtran}
\bibliography{main-quantum}

\end{document}